\documentclass[journal]{IEEEtran}
\usepackage{amsmath,amssymb}
\usepackage{subfigure}
\usepackage{graphicx,graphics,color,psfrag}
\usepackage{cite,balance}
\usepackage{caption}
\usepackage{framed}
\captionsetup{font={small}}
\allowdisplaybreaks
\usepackage{algorithm}
\usepackage{algorithmic}
\usepackage{accents}
\usepackage{amsthm}
\usepackage{bm}
\usepackage{url}
\usepackage[english]{babel}
\usepackage{multirow}
\usepackage{enumerate}
\usepackage{cases}
\usepackage{stfloats}
\usepackage{dsfont}
\usepackage{color,soul}
\usepackage{amsfonts}
\usepackage{cite,graphicx,amsmath,amssymb}
\usepackage{subfigure}
\usepackage{fancyhdr}
\usepackage{hhline}
\usepackage{graphicx,graphics}
\usepackage{array,color}
\usepackage{mathtools}
\usepackage{amsmath}
\usepackage[T1]{fontenc}
\usepackage{float}  
\usepackage{subfigure} 
\usepackage{stfloats}
\usepackage{siunitx}

\newtheorem{definition}{\emph{\underline{Definition}}}

\newtheorem{lemma}{\emph{\underline{Lemma}}}
\newtheorem{observation}{\emph{\underline{Observation}}}

\newtheorem{proposition}{\emph{\underline{Proposition}}}

\newtheorem{remark}{\bf \emph{\underline{Remark}}}

\def\l{\left}
\def\r{\right}
\def\({\left(}
\def\){\right)}

\setcounter{page}{1}



\def\b0{{\mathbf{0}}}








\newcommand{\nn}{\nonumber}

\begin{document}
\captionsetup[figure]{name={Fig.}}
\title{\huge Near-field Beam Training with Sparse DFT Codebook} 
\author{Cong Zhou, Chenyu Wu, Changsheng You,~\IEEEmembership{Member,~IEEE}, Jiasi Zhou, Shuo Shi,~\IEEEmembership{Member,~IEEE}
	\thanks{Cong Zhou, Chenyu Wu and Shuo Shi are with the School of Electronic and Information Engineering, Harbin Institute of Technology, Harbin, 150001, China. (e-mail: zhoucong@stu.hit.edu.cn, wuchenyu@hit.edu.cn, crcss@hit.edu.cn).}
	\thanks{Changsheng You is with the Department of Electronic and Electrical Engineering, Southern University of Science and Technology, Shenzhen 518055, China. (e-mails: youcs@sustech.edu.cn).}
	\thanks{Jiasi Zhou is with the School of Medical Information and Engineering, Xuzhou Medical University, Xuzhou, 221004, China. (e-mails: jiasi{\_}zhou@xzhmu.edu.cn).}
	\thanks{\emph{Corresponding author: Chenyu Wu and Changsheng You.}}
}

\maketitle
\begin{abstract}
	Extremely large-scale arrays (XL-arrays) have emerged as one promising technology to improve the spectral efficiency and spatial resolution in future sixth generation (6G) wireless systems.
	The drastic increase in the number of antennas renders the communication users more likely to be located in the near-field region, which requires a more accurate spherical (instead of planar) wavefront propagation modeling.
	However, this also inevitably incurs unaffordable beam training overhead when performing a two-dimensional (2D) beam-search in both the angular and range domains.
	To address this issue, we first introduce in this paper a new \emph{sparse} discrete Fourier transform (DFT) codebook, which exhibits the angular periodicity in the received beam pattern at the user.
	This thus motivates us to propose a three-phase beam training scheme.
	Specifically, in the first phase, we utilize the sparse DFT codebook for beam sweeping in an angular subspace and estimate candidate user angles according to the received beam pattern.
	Then, a central subarray is activated to scan specific candidate angles for resolving the issue of angular ambiguity for identifying the user angle.
	In the third phase, the polar-domain codebook is applied in the estimated angle to search the best effective user range.
	Finally, numerical results show that our proposed beam training scheme enabled by the sparse DFT codebook achieves 98.67\% beam training overhead reduction as compared to the exhaustive-search scheme, yet without compromising rate performance in the high signal-to-ratio (SNR) regime.
\end{abstract}
\begin{IEEEkeywords}
	Extremely large-scale array, near-field communications, beam training, DFT codebook, sparse array.
\end{IEEEkeywords}

\section{Introduction}
Extremely large-scale arrays/surfaces (XL-arrays/surfaces) have been envisioned as one of the key ingredients to drive the evolution of six generation (6G) wireless systems\cite{nf_mag,nf_bm,xlmimo,WuIntelligent2021}.
Specifically, XL-arrays/surfaces with a significant number of antennas can be deployed at the base station (BS) to achieve ultra-high spectral efficiency and spatial resolution, hence accommodating the escalating demands for new applications such as spanning metaverse and digital twin~\cite{liu_review, Khan2022DigitalTwin,Paul2024DigitalTwin}. The drastic increase in the number of antennas in high-frequency bands represents a qualitative paradigm shift in the electromagnetic (EM) propagation modeling, giving rise to the new \textit{near-field communications} \cite{zhang2023nearYou}. 

Particularly, different from the far-field EM propagation which is simply approximated by planar waves, the near-field channel modeling necessitates the use of more accurate spherical waves\cite{9723331,you2024next,an2024near}.  As such, near-field communications possess several unique properties in contrast to far-field communications. First, the spherical wavefront characteristic opens up the possibility of near-field \textit{beamfocusing}, for which the beam energy can be concentrated at a specific location/region rather than a spatial angle typically for far-field beamforming\cite{beamfocusing,10068140,nepa2017near}. The beam-focusing capability of near-field communications enables XL-array to flexibly form highly directional beams in both the angle and range domain, and hence can be leveraged in various applications to improve the system performance, such as mitigating the inter-user interference, improving the accuracy of sensing and localization\cite{cong2023near,chen2024integrated,liuyuanwei2024near}, and enhancing the charging efficiency of wireless power transfer\cite{wpt}. Second, the rank of line-of-sight (LoS) channels for near-field multiple-input multiple-output (MIMO) communication systems can be larger than one, hence enhancing the spatial multiplexing gains~\cite{10117500}. 
In this paper, we propose an efficient near-field beam training scheme with an sparse discrete Fourier transform (DFT) codebook by using the sparse antenna activation method, which significantly reduces the beam training overhead.
\subsection{Related Works}
\subsubsection{Near-Field Wireless Systems} Near-field communications bring new  opportunities and challenges, which has motivated upsurging research interest. For example, in \cite{beamfocusing}, the authors studied the beamfocusing design based on the fully-digital architectures, hybrid phase shifter-based precoders, and dynamic metasurface antenna architecture for XL-MIMO arrays. It is shown that near-field beamfocusing provides new degree of freedom (DoF) to mitigate the interference among users, even when they are located at the same spatial angle. In addition, a new concept of location division multiple access (LDMA) was proposed in \cite{ldma}, which exploits orthogonality of the near-field beamfocusing vectors in the range domain to serve different users at the same angle. The authors in \cite{kangda} developed a framework for analyzing and designing XL-MIMO systems with spatial non-stationarity.
It was revealed that the performance of the proposed framework approaches that of the conventional full-antenna array based designs albeit with lower complexity. In addition, a holographic metasurface antennas (HMAs) based multi-user system was investigated in \cite{nf_holo}, where the digital transmit precoder and the analog HMA weighting matrix were jointly optimized to minimize the transmit power. The authors in \cite{yunpu_swipt} considered a simultaneous wireless information and power transfer (SWIPT) system, where energy harvesting (EH) and information decoding receivers are located in the near- and far-field regions of the XL-array, respectively. 
The beam scheduling and power allocation were jointly optimized to maximize the weighted-sum power harvested at EH receivers~\cite{yunpu_swipt}.
Moreover, a directional modulation system was developed for near-field physical layer security systems\cite{nf_pls2}. Specifically, a fully analog precoding algorithm along with artificial noise and power allocation was proposed to realize secure transmission in both the angular and range domains. Near-field  sensing with XL-array was studied in \cite{nf_sensing}, where the closed-form expressions of the Cramér-Rao Bounds for both the angle and range estimations are derived.
\subsubsection{Near-field Beam Training}
In high-frequency bands, direct channel estimation methods may not be very effective due to severe path-loss and signal misalignment. As such, beam training is efficient in establishing initial links with high signal-to-noise ratio (SNR) for data transmission and channel state information (CSI) acquisition \cite{zhang2022dual}. However, due to the spherical wave propagation, near-field beam training is more challenging compared with its far-field counterpart as it requires a joint beam search over both the angular and range domain. Particularly, conventional far-field beam training will suffer from significant performance loss in the near-field region due to the so-called \textit{energy-spread effect}, for which the energy of a far-field beamformer is no longer steered towards one angle, but spread in multiple angles. 
Hence, the conventional far-field beam training method cannot be directly applied to the near-field beam training. To address this issue, the authors in \cite{nf_exhaustive} proposed a new codebook design in the polar domain, for which the angular domain is uniformly sampled whereas the range domain is non-uniformly sampled. One can simply invoke this codebook for the exhaustive-search based near-field beam training. However, the overhead of this scheme is the product of the number of antennas and range samples, which is prohibitively high for the implementation of XL-arrays. 
To reduce the overhead, the authors in \cite{two_phase} proposed a novel two-phase near-field beam training method. This method leveraged a key observation that the true user angle approximately lies in the middle of an angular support with high received SNRs. Based on this observation, one can first estimate the user angle by using far-field DFT beams, and then estimate the user range with the polar-domain codebook in~\cite{nf_exhaustive}. Besides, deep learning techniques have also been exploited  in \cite{dl} and \cite{10163797} to reduce the near-field beam training overhead, where deep neural networks (DNN) are trained based on conventional far-field codebooks and near-field codebook, respectively. However, the training overhead of these methods scales linearly with the number of antennas, which is still unaffordable for communication systems. This issue motivates the design of efficient hierarchical beam training schemes for near-field communications to reduce the training overhead to the logarithmic order, e.g., \cite{twostage,chirp_hier,dai_hier}.
However, hierarchical beam training schemes suffer from several inherent drawbacks.
First, hierarchical methods usually require frequent feedback and neglect underlying transmission delays.
Moreover, there exists the error propagation issue due to the progressive beam search \cite{noh2015multi}.

\subsection{Motivations and Contributions}
The existing works on near-field beam training inevitably incur unacceptable overhead, while the off-grid channel estimation methods face highly computational complexity such as high dimensional matrix inversion.
Moreover, the beam training overhead of these existing works scales linearly with the number of antennas apart from hierarchical beam training schemes, which suffers from several inherent drawbacks such as error propagation and user feed-back overhead~\cite{noh2015multi}.
Motivated by the above, this paper explores a new sparse DFT codebook and a three-phase beam training scheme to reduce the near-field beam training overhead, which scales with the square root of the number of antennas.
The main contributions are summarized as follows.
\begin{itemize}
	\item First, we propose a novel sparse DFT codebook by sparsely activating the XL-array antennas and constructing the reduced DFT codebook with the equivalent sparse linear array (SLA).
	Specifically, the sparse DFT codebook consists of sparse far-field channel response vectors, which is sampled from far-field channel response vectors.
	Then, we characterize the received beam pattern at the near-field user when the sparse DFT codebook is used for beam sweeping. Interestingly, it is shown that the received beam pattern exhibits periodicity in the angular domain, while there still exists the energy-spread effect.
	Then, we show that the user angle information is contained in a period of the received beam pattern at the user and can be estimated via a defined \emph{angular support}.
	\item Second, we propose a novel three-phase beam training scheme based on the  sparse DFT codebook.
	Specifically, in the first phase, we utilize a small number of the sparse DFT codewords to sweep an angular subspace and estimate one candidate user angle according to the middle of the defined angular support.
	Then, in the second phase, we activate a central subarray to resolve the angular ambiguity by virtue of the periodicity of the received beam pattern at the user.
	Subsequently, the polar-domain codebook is utilized to search the best user range in the estimated user angle.
	\item Finally, extensive numerical results are presented to demonstrate the effectiveness of our proposed beam training scheme enabled by the sparse DFT codebook.
	It is shown that the proposed three-phase beam training scheme can achieve nearly the same performance with the exhaustive-search beam training method in the high-SNR regime, while reducing more than 98\% of the training overhead.
	In the low-SNR regime, the proposed scheme suffers from slight performance loss, while the effective rate still significantly exceeds all benchmark schemes due to lower beam training overhead.
\end{itemize}
\subsection{Organization and Notations}
The remainder of this paper is organized as follows.
System model is presented in Sections \ref{Sec:System model}.
In Section \ref{Sec:Bench}, we introduce several benchmarks.
Section \ref{section:Codebook Design} provides comprehensive analysis of the received beam pattern at the user with the sparse DFT codebook.
Then, in Section \ref{Sec:scheme2}, the proposed three-phase beam training scheme is elaborated. 
Finally, numerical results are provided in section \ref{Sec:numericalResults} to demonstrate the effectiveness of the proposed beam training scheme followed by the conclusions made in Section \ref{Sec:Conclusion}. 

\textit{Notations}: Vectors and matrices are respectively denoted by lower-case and upper-case boldface letters.
The symbol $ \left|\cdot\right| $ represents the absolute value, while $ \left\lVert \cdot\right\lVert $ denotes the $ l_{2} $ norm.
Moreover, we use $ (\cdot)^{H} $ to denote the conjugate transpose operation.
Finally, the Hadamard product is represented by $ \odot $. The key symbols used in this paper are listed in Table \ref{table1}.
\begin{table*}[htb]
	\renewcommand{\arraystretch}{1.5}
	\setlength{\tabcolsep}{6pt}
	\centering
	\caption{List of main symbols and their physical meanings.}
	\label{table1}
	\begin{tabular}{|c|l|c|l|}
		\hline
		$N$  & Number of BS antennas  &  $U $ & Antenna activation interval \\ \hline
		$D$ & Array aperture size  & $M $ & Number of antennas of the activated subarray \\ \hline
		$\lambda$  & Carrier wavelength   &$z$ & AWGN \\ \hline
		$d_0$  & Antenna spacing  &$Q$  & Number of antennas of the activated SLA \\ \hline
		$\mathbf{h}^H_{\rm near}$ & Near-field channel & $\kappa$ & Rician factor
		\\ \hline
		$L$  & Number of channel paths & ${{{P}}_{\rm tot}}$ & Transmit SNR    \\ \hline
		$\theta_{0}$ & BS center-user spatial angle  & ${{\mathcal{W}}}_{\rm{DFT}}$ & Sparse DFT codebook \\ \hline
		$r_{0}$  & BS center-user range & $\beta$ & Channel gain \\ \hline
		$\mathbf{b}\left(r_{0}, \theta_{0} \right) $ & Near-field channel steering vector & $\Delta $ & Difference of spatial angles \\ \hline
		${\mathbf{w}}$ & Beamforming vector & $f({r_0},{\theta _0};\theta )$  & Received beam pattern \\ \hline
		${{\mathcal{V}}}_{\rm{Sub}}$ & DFT codebook for the activated subarray & $\bar{\mathcal{X}}_{\rm{Pol}}$  & Polar-domain codebook \\ \hline
	\end{tabular}
\end{table*}
\section{System Model} 
\label{Sec:System model}
We consider a single-user XL-array downlink communication system, where the BS is equipped with a dense uniform linear array (ULA) with $ N $  antennas. 
In this section, the near-field channel and signal model for the ULA are introduced. 
\subsection{ Near-field Channel Model}
We assume that the dense ULA is situated at the $y$-axis and centered at the origin. Specifically, each antenna of XL-array is located at ($0, nd_{0}$), where $ n \in \mathcal{N} \triangleq \{0,\pm 1,\cdots, \pm \frac{N-1}{2}\} $ and $ d_{0}$ respectively denote the antenna index and inter spacing.
For the dense ULA, we have $ d_{0} = \frac{\lambda}{2}$, where $ \lambda $ represents the carrier wavelength.
Moreover, the single user is assumed to be located in the Fresnel near-field region of the XL-array where the BS-user range $ r_{0} $ is larger than the Fresnel distance $Z_{\rm F}=\max{\{ d_{R} , 1.2D \} } $ and smaller than the Rayleigh distance $Z_{\rm R}=\frac{2D^2}{\lambda}$ with $ D = (N-1)d_{0} $ denoting the array aperture.
Moreover, $ d_{R} $ is proven to be several wavelengths in \cite{ouyang2024impact} and the Fresnel distance can be simplified by $ Z_{\rm F} = 1.2D $.
Hence, the line-of-sight (LoS) channel follows the uniform spherical wave (USW) model \cite{lu2023tutorial}.
For example, when $N=257$ and $f = 30$ GHz, the Rayleigh distance is approximately $328$ m, which makes the user more likely to be located in the near-field region.
Then the general  multi-path channel from the XL-array to the user can be modeled as \cite{yunpu_swipt}
\begin{equation}
\label{Eq:nf-channel}
\mathbf{h}^H_{\rm near} = \sqrt{N}\beta \mathbf{b}^{H}(r_{0}, \theta_{0}) + \sum_{\ell=1}^{L} \sqrt{\frac{N}{L}}\beta_{\ell} \mathbf{b}^{H}(\bar{r}_{\ell}, \bar{\theta}_{\ell}),
\end{equation}
which includes one LoS path $ \mathbf{h}_{\rm LoS}^H $ and $L$ non-LoS (NLoS) paths.
Herein, the parameters $ r_0 $ ($ {\bar r_\ell} $) and $ \theta_{0} $ ($ \bar{\theta}_{\ell} $) represent the range and spatial angle of the LoS ($ \ell $-th NLoS) signal path.
Moreover, the parameters $ \beta $ and $ \beta_{\ell} $ denote the LoS path and $ \ell $-th NLoS path gain, respectively.
Mathematically,  $ \beta $ can be modeled as \cite{zhang2020capacity}
\begin{equation}
	\beta_{}=\sqrt{\frac{\kappa}{\kappa+1}} \frac{\sqrt{\beta_0}}{r_{{0}}} e^{-\frac{\jmath 2 \pi r_{{0}}}{\lambda}}, 
\end{equation}
where $ \kappa $ and $ \beta_0 $ represent the Rician factor and reference channel
gain at a range of 1 m, respectively.
\begin{figure}
	\centering
	\includegraphics[width=7.5cm]{./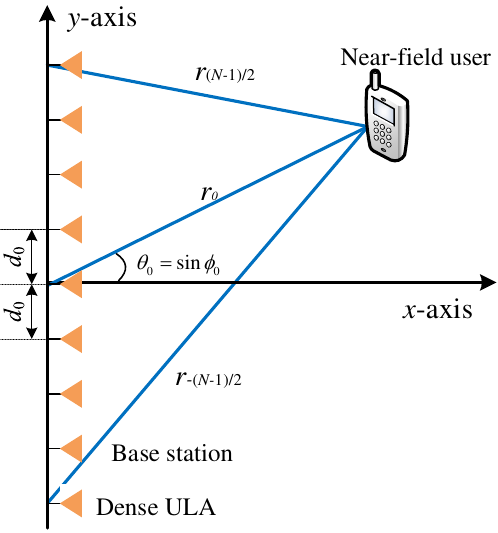}
	\caption{A narrow-band Near-field XL-array communication system.}
	\label{fig:systemModel}
\end{figure}

In this paper, we mainly consider the near-field communication scenarios in high-frequency bands such as millimeter-wave (mmWave) and even terahertz (THz). 
In these scenarios, the NLoS channel paths exhibit negligible power owing to the severe path-loss and shadowing effects \cite{nf_ris}. 
Therefore, we only consider the LoS channel and the BS-user channel can be approximated as $\mathbf{h}^H_{\rm near} \approx \mathbf{h}_{\rm LoS} $ \footnote{{The case where comparable multi-path components exist is more complicated and discussions are provided in the Section \ref{sec:discussion}. We will evaluate the Rician fading channel in the simulation results.}}.
Based on USW model, the near-field LoS channel from BS$\to$user can be modeled as~\cite{two_phase}
\begin{equation}\label{Eq:nf-model}
	\mathbf{h}^H_{\rm near}\approx \sqrt{N}\beta \mathbf{b}^{H}(r_{0}, \theta_{0}),
\end{equation}
where $\mathbf{b}^{H}(r_{0},\theta_{0})$ denotes the near-field channel steering vector, defined as \cite{yunpu2} 
\begin{equation}\label{near_steering}
	\left[\mathbf{b}^H\left(r_{0},\theta_{0}\right) \right]_{n} = \frac{1}{\sqrt{N}}e^{-\frac{\jmath 2 \pi r_{n}}{\lambda}}, \forall n\in \mathcal{N},
\end{equation}
with $r_{n}=\sqrt{r_{0}^2+ {n}^2d_0^2-2r_{0}\theta_{0} nd_0}$ denoting range between the $ n $-th antenna and the user. 
Moreover, $\theta_{0} = \cos \phi_0 \in[-1,1]$ represents the spatial angle at the BS, with $ \phi_0 $ denoting the physical angle-of-departure (AoD) from the BS center to the user.
Further, by means of Fresnel approximation, $ r_n $ can be approximated as
\begin{equation}
\label{eq:Fresnel approximation}
	r_{n} \approx r_{0}-nd_0 \theta_{0}+\frac{n^2 d_0^2 (1- \theta_{0}^2)}{2 r_{0}},
\end{equation}
which is shown to be accurate in \cite{liu_review}.
\subsection{Signal Model} 
Let $x\in \mathbb{C}$ denote the transmitted symbol by the BS with unit power and $\bar{\mathbf{w}}\in \mathbb{C}^{N \times 1}$ represent the beamforming vector~\cite{8030501}. Then the received signal at the user is given by
\begin{align}\label{Eq:general_Sig}
	y(\bar{\mathbf{w}})
	=\sqrt{N}\beta \sqrt{P_{\rm tot}}\mathbf{b}^{H}(r_{0},\theta_{0})\bar{\mathbf{w}}x+z,
\end{align}
where $z$ is the received additive white Gaussian noise (AWGN) and $z \sim  \mathcal{C N}\left(0, \sigma^2\right) $. Moreover, $ P_{\rm tot} $ denotes the total transmit power of the BS.
Then, the achievable rate in bits/second/hertz (bps/Hz) is given by
\begin{equation}
	R=\log_{2}\l(1+\frac{P_{\rm tot} N|\beta |^2|\mathbf{b}^{H}(r_{0},\theta_{0})\bar{\mathbf{w}}|^2}{\sigma^2}\r).
\end{equation}

\section{Benchmark Beam Training Schemes}
\label{Sec:Bench}
In this section, two benchmark beam training schemes and their drawbacks are presented.
\subsection{2D Exhaustive-Search Beam Training Method}
\label{Sec:Exhaustive Search} 
The authors in \cite{nf_exhaustive} proposed a \emph{polar-domain} codebook, each steering a focusing beam to a specific location.
In particular, the angular domain is uniformly sampled, while the range domain is \emph{non-uniformly} sampled.
Specifically, the polar-domain codebook is given by 
\begin{equation}
\label{eq:cui polar-domain codebook}
\bar{\mathcal{W}}_{\rm{Pol}} = \{\bar{\mathcal{W}}_1,\cdots,\bar{\mathcal{W}}_{\bar n},\cdots,\bar{\mathcal{W}}_{N} \},
\end{equation}
where $ \bar{\mathcal{W}}_{\bar n} = \{ \bar{\mathbf{w}}_{\bar{n},1} \cdots,\bar{\mathbf{w}}_{\bar{n},v},\cdots \bar{\mathbf{w}}_{\bar{n},V}\}$ denotes the sub-codebook steering $ V $ beams towards the angles $ \theta_{\bar{n}} = \frac{2 {\bar n}-N+1}{N}, \forall {\bar n} \in \bar{\mathcal{N}} \triangleq \{1,2, \cdots, N\} $.
Mathematically, we have $ \bar{\mathbf{w}}_{{\bar n},v}=  \mathbf{b}\left(r_{{\bar n}, v}, \theta_{\bar n}\right)$ where $ r_{{\bar n},v}=\frac{1}{v} \alpha_{\Delta}\left(1-\theta_{{\bar n}}^2\right), \quad \forall v\in\mathcal{V}\triangleq\{1,2,3, \cdots V\} $
with $\alpha_{\Delta}\triangleq\frac{N^2 d_0^2}{2 \lambda \beta_{\Delta}^2}$.
Moreover, $\beta_{\Delta}$ is a constant corresponding to the quantization loss in the range domain \cite{nf_exhaustive}.
Given the polar-domain codebook $\bar{\mathcal{W}}_{\rm{Pol}}$, a \emph{two-dimensional exhaustive search} method can be directly applied in both the angular and range domains to search the best codeword, which yields the maximum received SNR at the user.
The beam training overhead of this exhaustive-search beam training method is $T^{\rm{(ex)}}=NV$, which is proportional to the product of the number of antennas and range samples.
When the number of antenna is large, the beam training overhead is unaffordable.

\subsection{Two-Phase Near-field Beam Training}
\label{Sec:Two phase}
To further reduce the beam training overhead, the authors in \cite{two_phase} proposed a two-phase near-field beam training method, which explored the so-called energy-spread phenomenon.
Specifically, when the far-field DFT codebook is used for the angular sweeping in the near-field, it is observed that the user angle approximated lies in the middle of an angular support region, which is to be defined in Section \ref{section:Codebook Design}.
Mathematically, the conventional DFT codebook is given by
\begin{align}\label{Eq:farDFT}
\bar{\mathbf{w}}_{\bar n}&=\mathbf{a}(\theta_{\bar n})\triangleq
\frac{1}{\sqrt{N}}\left[1,e^{-\jmath \pi\theta_{\bar n}},\cdots,e^{-\jmath \pi(N-1)\theta_{\bar n}}\right],
\end{align}
where $ \theta_{\bar{n}} = \frac{2 {\bar n}-N+1}{N}, \forall {\bar n} \in \bar{\mathcal{N}}$.
Given this observation, they explore the conventional DFT codebook in the first phase to perform beam sweeping, which estimates the user angle information. 
Then, given the candidate user angle, the polar-domain codebook in \eqref{eq:cui polar-domain codebook} is used to search the best user range in the second phase.
The beam training overhead of the two-phase beam training method is $T^{(\rm 2P)}=N+KV$ with $K$ representing the number of candidate user angles.
Although this method significantly reduces the beam training overhead of the exhaustive-search method, the overhead of the two-phase beam training method is still proportional to the number of antennas, which is prohibitively high as $N$ is sufficiently large.

To address the above issues, we propose a new near-field beam training method using a proposed sparse DFT codebook, which is equivalent to sparsely activating the dense ULA equipped by the BS, yielding extremely lower overhead as compared with various benchmark schemes.

\section{Received Beam Pattern of the Sparse DFT Codebook}
\label{section:Codebook Design}
In this section, we first introduce the sparse DFT codebook and then analyze its received beam pattern.
Moreover, we design a periodical beam training codebook to reduce overhead and propose to activate a central subarray for resolving the angular ambiguity.
The main definitions in this section are given as follows, including the received beam pattern and angular support.
\begin{definition}
	\emph{Given a fixed near-field user located at $ (r_{0}, \theta_{0}) $ and an arbitrary far-field beamforming vector $ \bar{\mathbf{w}} = \mathbf{a}(\theta) $ steering the beam towards the angle $ \theta $, the received beam pattern at the user is defined as
		\begin{equation}
			\label{received beam pattern in general}
			f( r_{0}, \theta_{0};\mathbf{\theta} )\triangleq|\mathbf{b}^H(r_{0},\theta_{0}) \mathbf{a}(\theta)|, \forall \theta.
	\end{equation} }
\end{definition}

\begin{definition}\label{De:angularSupport}
	\emph{Given a near-field channel response vector $\mathbf{b}^H(r_{0},\theta_{0})$ and a far-field beamforming vector $ \bar{\mathbf{w}} = \mathbf{a}(\theta) $ with $ \theta \in \mathcal{L} $, the $3$-dB \emph{angular support}  $\mathcal{A}_{\mu}^{\mathcal{L}}(r_{0}, \theta_{0})$ in the region $ {\mathcal{L}} $ is defined by \cite{two_phase}
		\begin{align}\label{Eq:AdB}
		\mathcal{A}^{\mathcal{L}}_{\mu}(r_{0},\theta_{0})\!\!=\!\!
		\left\{\theta\mid f\left(r_{0},\theta_{0}, \theta\right)\!>\!\kappa \max _{\theta \in \mathcal{L}} f\left(r_{0}, \theta_{0},\theta\right)\right\},
		\end{align}
		where $ \kappa = 10^{\mu/10}$. Moreover, let $\theta_{\rm left}$ and $\theta_{\rm right}$ be the smallest and largest angle in $\mathcal{A}^{\mathcal{L}}_{\mu}(r_{0}, \theta_{0})$. Then, its \emph{angular support width} is defined as 
		\begin{equation}
		\Gamma_{\mu}^{\mathcal{L}}(r_{0},\theta_{0})= \theta_{\rm right}-\theta_{\rm left}.
		\end{equation}}
\end{definition}
\begin{figure}
	\centering
	\includegraphics[width=7.5cm]{./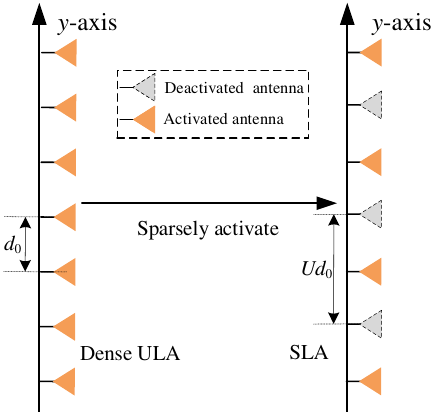}
	\caption{Illustration of sparse linear array.}
	\label{fig:sparse activation}
\end{figure}
\subsection{Sparse DFT Codebook}
Each column of the DFT Codebook in \eqref{Eq:farDFT} is a far-field channel response vector.
For the sparse DFT codebook, we sample each column of the original DFT codebook using an interval of $ U $, while the positions that are not sampled are padded with zeros. As such, $ \bar n $-th column of the sparse DFT Codebook $ \tilde{\mathbf{w}}_{\bar n} \in \mathbb{C}^{N \times 1} $ is given by $ \tilde{\mathbf{w}}_{\bar n} = \bar{\mathbf{w}}_{\bar n} \odot \mathbf{n}(U) $, where
\begin{equation}
\label{Eq:samped vector}
\mathbf{n}^{H}(U) = [1,\underbrace{0,\cdots,0,}_{U-1}1,\cdots,1]
\end{equation}
denotes the sampling vector.
It can be verified that the number of non-zero elements in $ \tilde{\mathbf{w}}_{\bar n} $ is $ Q =  \frac{N-1}{U}+1 $ (assuming $ Q $ is an integer for convenience).
Moreover, we rearrange the non-zero elements in $ \tilde{\mathbf{w}}_{\bar n} $ into a new vector $ {\mathbf{w}_{\bar n}} \in \mathbb{C}^{Q \times 1} $, which is given by
{\small
\begin{equation}
	\label{eq:SLA steering vector}
	{\mathbf{w}}_{\bar n} = \mathbf{a}_{\rm SLA}(\theta_{\bar n}) = \frac{1}{\sqrt{Q}}\bigg[1,e^{\jmath {\pi}U\theta_{\bar n}}, \cdots,e^{\jmath{\pi}(Q-1)U\theta_{\bar n}}\bigg]^H\!\!,
\end{equation}}
referred to as the \emph{sparse far-field beamforming vector}.

It is noteworthy that this sampling method is equivalent to sparsely activating a number of antennas of the the XL-array with an interval of $ U $, which effectively transforms the equipped ULA into an SLA with an inter-element spacing of $ Ud_0 $, as illustrated in~Fig.~\ref{fig:sparse activation}.
Furthermore, for brevity, we denote the channel response vector of the effective SLA as
\begin{equation}
	\left[\mathbf{b}^H_{\rm SLA}\left(r_0,\theta_0\right) \right]_{q} = \frac{1}{\sqrt{Q}}e^{-\frac{\jmath 2 \pi r_{q}}{\lambda}}, \forall q\in \mathcal{Q},
\end{equation}
where $ \mathcal{Q} \triangleq \{ 0, \pm 1, \cdots, \pm \frac{Q-1}{2} \} $ denotes the set of the SLA antenna index and $ r_{q} = \sqrt{r_0^{2} -2 q  U d_0 r_0 \theta_0 + (q U d_0)^2} $ represents the range between the user and $ q $-th antenna of the SLA.
Similar to \eqref{eq:Fresnel approximation}, $ r_{q} $ can be approximated as $ r_{q} \approx r-qUd_0 \theta_0+\frac{q^2 (Ud_0)^2 (1- \theta_0^2)}{2 r_0} $ with Fresnel approximation.
Then, the received signal with beamforming vector $ \mathbf{w}_{\bar n} $ at the user is rewritten as
\begin{align}\label{Eq:reritten received signal}
	y(\mathbf{w}_{\bar n})
	=\sqrt{Q}\beta \sqrt{P_{\rm tot}}\mathbf{b}^{H}_{\rm SLA}(r_{0},\theta_{0})\mathbf{w}_{\bar n}x+z.
\end{align}
Moreover, the received beam pattern at the user in \eqref{received beam pattern in general} can be rewritten as 
\begin{equation}
	\begin{aligned}
		f( r_{0}, \theta_{0};\theta_{\bar n})&=|\mathbf{b}^H(r_{0},\theta_{0}) {\bar{\mathbf{w}}_{\bar n}}| = |\mathbf{b}^H_{\rm SLA}(r_{0},\theta_{0}) {\mathbf{w}}_{\bar n}|\\
		&=|\mathbf{b}_{\rm SLA}^H(r_{0},\theta_{0}) \mathbf{a}_{\rm SLA}(\theta_{\bar n})|.
	\end{aligned}
\end{equation}

\subsection{Near-field Received Beam Pattern}
\label{Sec:DFTbeamPattern}
To obtain more insights, we first characterize the received beam pattern at the user with sparse far-field beamforming vectors spanning in the continuous spatial angles.
Let $ \mathbf{w}(\theta) = \mathbf{a}_{\rm SLA}(\theta)$ represent a sparse far-field beamforming vector, for which $ \theta = \theta_{\bar n}, \forall {\bar n} \in {\bar {\mathcal{N}}} $ is the discrete sampled angle.
We first characterize the received beam pattern of the sparse far-field beamforming vector
$ {\mathbf{w}} $ as follows.
\begin{lemma}
	\emph{For a sampled beamforming vector $ {\mathbf{w}} $ parameterized by $\{Q, U\}$, the received beam pattern at the user is given by 
		\begin{align}
		\label{eq:GeneralMixedField}
		&f \left( r_{0}, \theta_{0}; \theta \right) \nn\\
		\!\!\!\!\!\!\overset{(a_1)}{\approx}& \frac{1}{Q}\!\! \left| \sum_{q\in \mathcal{Q}}\!\exp{\l(\underbrace{\jmath  {\pi q  U \Delta}}_{B_1}+\underbrace{\jmath {\frac{\pi}{\lambda} q^2 ( U d_{0})^2\frac{1-\theta_{0}^2}{r_{0}} }}_{B_2}\r)} \!\right|\!\\
		\triangleq &\hat{f} ( r_{0}, \theta_{0};\theta),\nn
		\end{align}\noindent where $ \Delta\triangleq \theta-\theta_{\rm 0}$ and $(a_1)$ is due to the Fresnel approximation and shown to be accurate in \cite{kosasih2023finite}. 
	}
\end{lemma}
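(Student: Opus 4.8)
The plan is to reduce the claim to a short calculation: expand $f(r_{0},\theta_{0};\theta)=|\mathbf{b}^H_{\rm SLA}(r_{0},\theta_{0})\mathbf{a}_{\rm SLA}(\theta)|$ entrywise over the $Q$ active elements, insert the Fresnel expansion of the per-element range $r_{q}$, and then exploit $d_{0}=\lambda/2$ together with the modulus to strip away the phase terms that do not depend on the summation index. Writing the inner product out with the entries of $\mathbf{b}^H_{\rm SLA}$ and $\mathbf{a}_{\rm SLA}$ placed at the centered SLA indices $q\in\mathcal{Q}$,
\[
\mathbf{b}^H_{\rm SLA}(r_{0},\theta_{0})\,\mathbf{a}_{\rm SLA}(\theta)=\frac{1}{Q}\sum_{q\in\mathcal{Q}}e^{-\jmath 2\pi r_{q}/\lambda}\,e^{-\jmath\pi qU\theta},
\]
where any overall phase incurred by a different choice of phase center is immaterial since $f$ is a modulus.

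First I would substitute $r_{q}\approx r_{0}-qUd_{0}\theta_{0}+\tfrac{q^{2}(Ud_{0})^{2}(1-\theta_{0}^{2})}{2r_{0}}$ — this is the sole approximation, namely $(a_1)$, and it is legitimate here because sparsifying the dense ULA preserves the array aperture, $(Q-1)Ud_{0}=(N-1)d_{0}=D$, so the equivalent SLA still operates in the Fresnel region and the truncation error is controlled exactly as in \cite{kosasih2023finite}. With $d_{0}=\lambda/2$, the linear-in-$q$ term $-\tfrac{2\pi}{\lambda}qUd_{0}\theta_{0}$ becomes $-\pi qU\theta_{0}$, which combines with the beamformer phase $-\pi qU\theta$ into $-\pi qU(\theta-\theta_{0})=-\pi qU\Delta$, while the quadratic term reads $-\tfrac{\pi}{\lambda}q^{2}(Ud_{0})^{2}\tfrac{1-\theta_{0}^{2}}{r_{0}}$. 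Next I would factor out the index-independent phase $e^{-\jmath 2\pi r_{0}/\lambda}$, which disappears inside $|\cdot|$, and finally apply $|w|=|\bar w|$ to conjugate the summand; this yields precisely
\[
f(r_{0},\theta_{0};\theta)\approx\frac{1}{Q}\left|\sum_{q\in\mathcal{Q}}\exp\!\left(\jmath\pi qU\Delta+\jmath\tfrac{\pi}{\lambda}q^{2}(Ud_{0})^{2}\tfrac{1-\theta_{0}^{2}}{r_{0}}\right)\right|,
\]
i.e.\ \eref{eq:GeneralMixedField} with $B_{1}=\jmath\pi qU\Delta$ and $B_{2}=\jmath\tfrac{\pi}{\lambda}q^{2}(Ud_{0})^{2}\tfrac{1-\theta_{0}^{2}}{r_{0}}$.

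There is no deep analytical obstacle: once the bookkeeping is set up, the statement is an exercise in manipulating complex exponentials. The two points that do require a little care are (i) fixing the phase-center/index convention at the outset so that the linear phase emerges cleanly as a function of $\Delta$ alone rather than of $\theta$ and $\theta_{0}$ separately plus cross terms, and (ii) arguing that the quadratic Fresnel truncation is admissible for the equivalent SLA — which, as noted above, follows from the invariance of the aperture under sparsification, so the near-field validity conditions and the error estimates of \cite{kosasih2023finite} carry over unchanged. Everything else reduces to the substitution $d_{0}=\lambda/2$ and the identity $|w|=|\bar w|$.
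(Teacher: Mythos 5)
Your proposal is correct and matches the derivation the paper intends but omits: the lemma is stated in the paper without an explicit proof, resting only on the cited Fresnel approximation, and your calculation (substitute the quadratic expansion of $r_q$, use $d_0=\lambda/2$ to merge the linear phase with the beamformer phase into $\pi qU\Delta$, discard index-independent phases and conjugate under the modulus) is exactly the routine computation behind it. Your two points of care are also well placed — in particular, the observation that only the purely linear beamformer phase is re-indexed (so re-centering costs only a global phase, while the quadratic term stays anchored to the physical, origin-centered antenna positions) and that the aperture identity $(Q-1)Ud_0=(N-1)d_0$ justifies carrying over the Fresnel error bounds to the equivalent SLA.
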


\begin{proposition}[The periodicity of $ \theta $]
	\label{pro:DFT periodicity}	{\rm 
	$ f \left( r_{0}, \theta_{0}; \theta \right) $ is a periodic function of $ \theta $ with the period of $ \frac{2}{U} $. Mathematically,  we have}
	\begin{equation}
		f \left( r_{0}, \theta_{0}; \theta \right) = f \left( r_{0}, \theta_{0}; \theta+\frac{2k}{U} \right), \forall k \in \mathbb{Z}.
	\end{equation}
\end{proposition}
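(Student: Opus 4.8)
The plan is to work directly from the approximate closed form in Lemma~1, equation~\eqref{eq:GeneralMixedField}, since the periodicity claim in Proposition~\ref{pro:DFT periodicity} is a statement about $\hat f(r_0,\theta_0;\theta)$. First I would write out $\hat f(r_0,\theta_0;\theta)$ explicitly with $\Delta = \theta - \theta_0$ and then substitute $\theta \mapsto \theta + \tfrac{2k}{U}$, which changes $\Delta$ to $\Delta + \tfrac{2k}{U}$. The key observation is that in the sum over $q \in \mathcal{Q}$, the only place where $\Delta$ enters is the phase term $B_1 = \jmath \pi q U \Delta$; the term $B_2$ depends on $r_0$, $\theta_0$, $U$, $d_0$, $\lambda$ and $q$ but \emph{not} on $\theta$, so it is untouched by the shift. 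Hence the whole effect of the substitution is to multiply the $q$-th summand by $\exp\!\big(\jmath \pi q U \cdot \tfrac{2k}{U}\big) = \exp(\jmath 2\pi q k)$.

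The crux of the argument is then the elementary fact that $\exp(\jmath 2\pi q k) = 1$ for every $q \in \mathcal{Q}$ and every $k \in \mathbb{Z}$, which holds because $q k$ is an integer (recall $\mathcal{Q} = \{0, \pm 1, \dots, \pm\tfrac{Q-1}{2}\}$ consists of integers, using the stated assumption that $Q$ is an integer, hence an odd integer, so all indices are integral). Therefore each summand is unchanged, the sum is unchanged, the modulus is unchanged, and $\hat f(r_0,\theta_0;\theta + \tfrac{2k}{U}) = \hat f(r_0,\theta_0;\theta)$. I would present this as a two- or three-line display: substitute, factor out $\exp(\jmath 2\pi q k)$ from each term, note it equals $1$, conclude. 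Since $f \approx \hat f$ under the Fresnel approximation that is already invoked and justified in Lemma~1, the periodicity transfers to $f$ as stated.

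I do not anticipate a genuine obstacle here — the result is essentially immediate once one isolates the $\theta$-dependence into $B_1$ and observes that $B_1$ is linear in $\theta$ with a coefficient that is an integer multiple of $\pi$ after the shift. The only point requiring a word of care is the integrality of the array indices $q$: one should state clearly that $Q$ is assumed to be an odd integer (consistent with $Q = \tfrac{N-1}{U}+1$ being an integer and the symmetric index set $\mathcal{Q}$), so that $\mathcal{Q} \subset \mathbb{Z}$ and $qk \in \mathbb{Z}$. A secondary remark worth including is that the true (pre-approximation) beam pattern $f$ is only \emph{approximately} periodic, with the approximation error inherited from $(a_1)$ in Lemma~1; if the authors want an exact statement they should phrase the proposition in terms of $\hat f$, but since the excerpt already writes $f \overset{(a_1)}{\approx} \hat f$ throughout, invoking that convention suffices. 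Finally I would note that the fundamental period is exactly $\tfrac{2}{U}$ (not a proper divisor of it) generically, which follows because $B_1$ alone, summed over the symmetric set $\mathcal{Q}$, is $\tfrac{2}{U}$-periodic and no smaller shift makes $e^{\jmath\pi q U \delta} = 1$ for all $q$ simultaneously unless $U\delta \in 2\mathbb{Z}$.
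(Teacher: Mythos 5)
Your argument is correct, and the core computation is the same one the paper uses: the shift $\theta\mapsto\theta+\tfrac{2k}{U}$ multiplies the $q$-th phase term by $e^{\jmath 2\pi qk}=1$ because $q$ and $k$ are integers. The difference is where you apply it. You work on the approximate closed form $\hat f$ from Lemma~1, which forces you to carry the Fresnel approximation $(a_1)$ along and to add the caveat that the true $f$ is ``only approximately periodic.'' The paper instead applies the observation one step earlier, directly to the sparse steering vector: $[\mathbf{a}_{\rm SLA}(\theta+\tfrac{2k}{U})]_q=e^{\jmath\pi qU\theta}e^{\jmath 2\pi qk}=[\mathbf{a}_{\rm SLA}(\theta)]_q$ exactly, so $f(r_0,\theta_0;\theta)=|\mathbf{b}^H_{\rm SLA}(r_0,\theta_0)\,\mathbf{a}_{\rm SLA}(\theta)|$ is \emph{exactly} $\tfrac{2}{U}$-periodic, with no approximation involved. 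Your caveat is therefore unnecessary rather than wrong: the $\theta$-dependence of $f$ lives entirely in the codeword, not in the channel vector, and the codeword is exactly periodic. I would restructure your proof to state the periodicity of $\mathbf{a}_{\rm SLA}(\theta)$ first and let periodicity of $f$ follow immediately; this is both shorter and stronger than routing through $\hat f$. Your closing remark that $\tfrac{2}{U}$ is the fundamental period (since $e^{\jmath\pi qU\delta}=1$ for all $q\in\mathcal{Q}$ forces $U\delta\in 2\mathbb{Z}$) is a nice addition that the paper does not make explicit.
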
 
\begin{proof}
	For an arbitrary integer $ k $ and $ {\bf{a}}(\theta + \frac{2k}{U}) $, we have
	\begin{equation*}
	\begin{aligned}
	{[ {\bf{a}}(\theta + {2k}/{U}) ]_q} &= \exp (\jmath\pi qU(\theta + {2k}/{U}))\\
	&= \exp (\jmath\pi qU\theta )\exp (\jmath 2\pi kqU)\\
	&= \exp (\jmath\pi qU\theta ) = {[{\bf{a}}(\theta )]_q}.
	\end{aligned}	
	\end{equation*}
	{\rm Hence, we have $ f  \left( r_{0}, \theta_{0}; \theta \right) =  f \left( r_{0}, \theta_{0}; \theta+k\frac{2}{U} \right)$ and thus complete the proof.}
\end{proof}
\begin{figure}
	\centering
	\includegraphics[width=\columnwidth]{./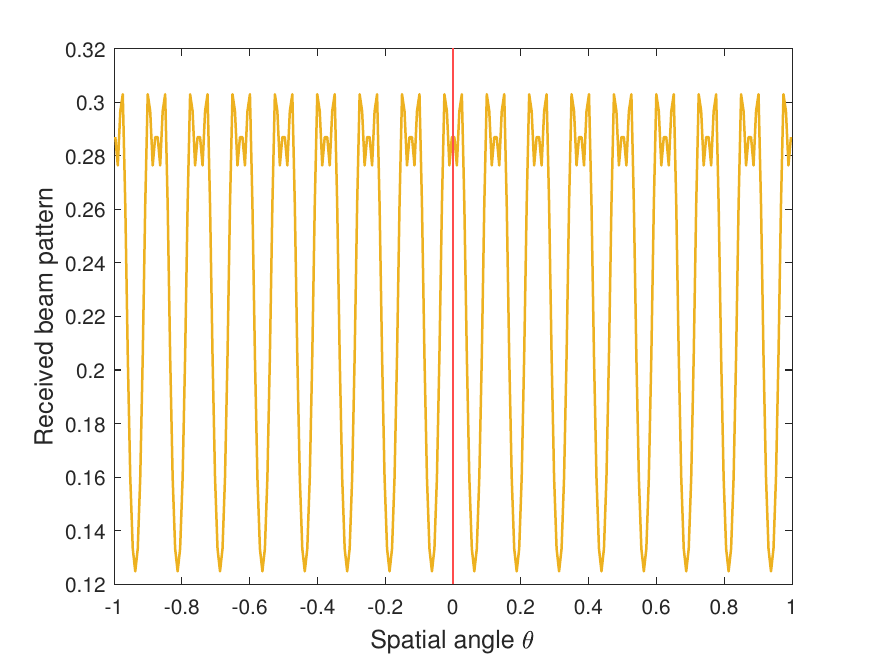}
	\caption{Received beam pattern, where $N=257$, $ U = 16 $ and $f=30$ GHz. The actual user angle is $ \theta_0 = 0 $ marked by the red line.}
	\label{fig:mixedFieldPeriod}
\end{figure}

In Fig. \ref{fig:mixedFieldPeriod}, we plot the received beam pattern versus the spatial angle $ \theta $, where the periodicity of $ \theta $ is exhibited.
Moreover, it is observed that the user angle is located in the middle of the angular support $\mathcal{A}^{\mathcal{L}_0}_{\mu}(r_{0}, \theta_{0})$, where $ \mathcal{L}_0 = [\theta_0-1/U,\theta_0+1/U) $ (a period $ \frac{2}{U} $).
Then, in the following, we prove that this observation holds for arbitrary user location $ (r_{0}, \theta_{0}) $.

Due to the periodicity of $ \theta $, we only need to focus on the region $ \mathcal{L}_0 $, i.e., $ \Delta \in  [-1/U,1/U)$, which inexplicitly contains useful user angle information.
Given $ \Delta \in  [-1/U,1/U)$, \eqref{eq:GeneralMixedField} can be approximated as follows.
\begin{lemma}\label{lemma:LSAgeneral}\emph{
		When $ \Delta \in  [-1/U,1/U)$, the received beam pattern $ \hat{f}( r_{0}, \theta_{0}; \theta)$ in \eqref{eq:GeneralMixedField} can be approximated as
		\begin{equation}\label{MixedFieldlosedForm}
		\begin{aligned}
			\hat{f}  \left( r_{0}, \theta_{0}; \theta \right) \approx \left| G(\beta_{1},  \beta_{2})\right|.
		\end{aligned}		
		\end{equation}
		Specifically, we have 
		\begin{equation}
			G(\beta_{1},  \beta_{2}) \triangleq \frac{\widehat{C}(\beta_{1},\beta_{2}) +  \jmath(\widehat{S}(\beta_{1},\beta_{2})}{2\beta_2},
		\end{equation}
		where $ \widehat{C}(\beta_{1},\beta_{2}) \triangleq {C}(\beta_{1}+\beta_{2}) - C(\beta_{1}-\beta_{2})$ {\rm and} $ \widehat{S}(\beta_{1},\beta_{2}) \triangleq S(\beta_{1}+\beta_{2}) - S(\beta_{1}-\beta_{2}) $. Further, $ C(x) $ and $ S(x) $ are Fresnel integrals, which are given by
		$$ C(x) = \int_{0}^{x} \cos(\frac{\pi}{2}t^2 ){\rm d}t,  S(x) = \int_{0}^{x} \sin(\frac{\pi}{2}t^2 ){\rm d}t. $$ Moreover, $ \beta_1 $ and $ \beta_2 $ are given by
		\begin{equation}
		\label{beta}
		\beta_{1} = \Delta\sqrt{\frac{r_{0}}{d_0(1-\theta_{0}^2)}},
		~~
		\beta_{2} = \frac{QU}{2}\sqrt{\frac{d_0(1-\theta_{0}^2)}{r_{0}}}.
		\end{equation}}
\end{lemma}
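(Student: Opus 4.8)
The plan is to turn the finite exponential sum in \eqref{eq:GeneralMixedField} into a Fresnel integral by the classical ``complete-the-square, then Riemann sum'' route. First I would fuse the two phase terms $B_1$ and $B_2$ into a single quadratic in the SLA index $q$, writing the exponent as $\jmath(aq+bq^2)$ with $a=\pi U\Delta$ and $b=\frac{\pi(Ud_0)^2(1-\theta_0^2)}{\lambda r_0}$. Completing the square gives $aq+bq^2=b\big(q+\frac{a}{2b}\big)^2-\frac{a^2}{4b}$, and since the residual factor $e^{-\jmath a^2/(4b)}$ has unit modulus it comes out of the absolute value, leaving $\hat f(r_0,\theta_0;\theta)=\frac1Q\big|\sum_{q\in\mathcal Q}e^{\jmath b(q+a/(2b))^2}\big|$.

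Next I would approximate the sum over $q\in\{0,\pm1,\cdots,\pm\frac{Q-1}{2}\}$ by the integral $\int_{-Q/2}^{Q/2}e^{\jmath b(x+a/(2b))^2}\,\mathrm{d}x$, regarding the sum as the midpoint rule for that integrand on unit-length cells. The substitution $t=\sqrt{2b/\pi}\,(x+a/(2b))$ rescales the quadratic phase to $\frac\pi2 t^2$, and --- invoking $d_0=\lambda/2$ --- the new integration limits collapse to exactly $\beta_1\pm\beta_2$ with $\beta_1,\beta_2$ as in \eqref{beta}; this is the routine bookkeeping that pins down the constants. Splitting $e^{\jmath\pi t^2/2}=\cos(\frac\pi2 t^2)+\jmath\sin(\frac\pi2 t^2)$ and using additivity of $C(\cdot)$ and $S(\cdot)$ over $[\beta_1-\beta_2,\beta_1+\beta_2]$ produces $\widehat{C}(\beta_1,\beta_2)+\jmath\widehat{S}(\beta_1,\beta_2)$, while the prefactor $\frac1Q\sqrt{\pi/(2b)}$ simplifies to $\frac1{2\beta_2}$; together these yield $\hat f\approx|G(\beta_1,\beta_2)|$.

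The only real obstacle is justifying the sum-to-integral replacement and making clear why the hypothesis $\Delta\in[-1/U,1/U)$ --- i.e. a single period of the pattern, by Proposition \ref{pro:DFT periodicity} --- is what keeps it tight: within one period the linear phase step per antenna is $\pi U\Delta$, bounded in magnitude by $\pi$, so the oscillation is not undersampled, and the quadratic phase step is $O(bQ)$, which stays controlled once the user lies in the Fresnel region and $Q$ is large, so the Riemann-sum error is negligible there. I would not attempt a sharp error bound; instead I would present the identity as asymptotically exact in this Fresnel-approximation regime, the same one invoked in \cite{kosasih2023finite,nf_exhaustive}, and cite Fig.~\ref{fig:mixedFieldPeriod} as numerical corroboration.
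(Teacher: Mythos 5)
Your proposal is correct and follows essentially the same route as the paper's Appendix A: complete the square in the quadratic phase, pull out the unit-modulus constant factor, replace the sum over $q$ by an integral over $[-Q/2,Q/2]$, and rescale to Fresnel integrals with limits $\beta_1\pm\beta_2$ and prefactor $1/(2\beta_2)$. Your added justification of the sum-to-integral step via the bounded linear phase increment on $\Delta\in[-1/U,1/U)$ is a sensible elaboration of a point the paper leaves implicit.
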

\begin{proof}
	Please refer to Appendix A. 
\end{proof}

As observed from (\ref{beta}), we have $$ \beta_{1}\beta_{2} = \frac{Q U\Delta}{2} \approx \frac{D}{2d_0} \Delta.$$ 

Hence, the function $G(\cdot)$ can be rewritten as a function of $\{\Delta, \beta_2\}$. It is observed that if the array aperture is fixed, $ \beta_2 $ is only determined by the user location $ (r_{0}, \theta_{0})$, which indicates that each user location corresponds to a specific $ \beta_2 $.
Furthermore, as the user moves farther from the BS (consequently, the user is more likely to be located in the far-field region), the value of $ \beta_2 $ decreases.
Supposing that $ N = 257 $, $ U = 16 $ and $ f = 30 $ GHz and the user is located at the Fresnel and focusing region ($ \theta_{0} = 0 $), we have $ \beta_2 \in [1.68, 3.57] $.
In Fig.~\ref{fig:energySpread}, we numerically show the  function of $G(\cdot)$. Importantly, for any near-field users, the so-called energy-spread phenomenon \cite{nf_exhaustive} is still observed. Moreover, due to the period of $ \theta $, the energy-spread effects exist in each period interval ($ \frac{2}{U} $). Then, two key observations are obtained.
\begin{figure}
	\centering
	\includegraphics[width=\columnwidth]{./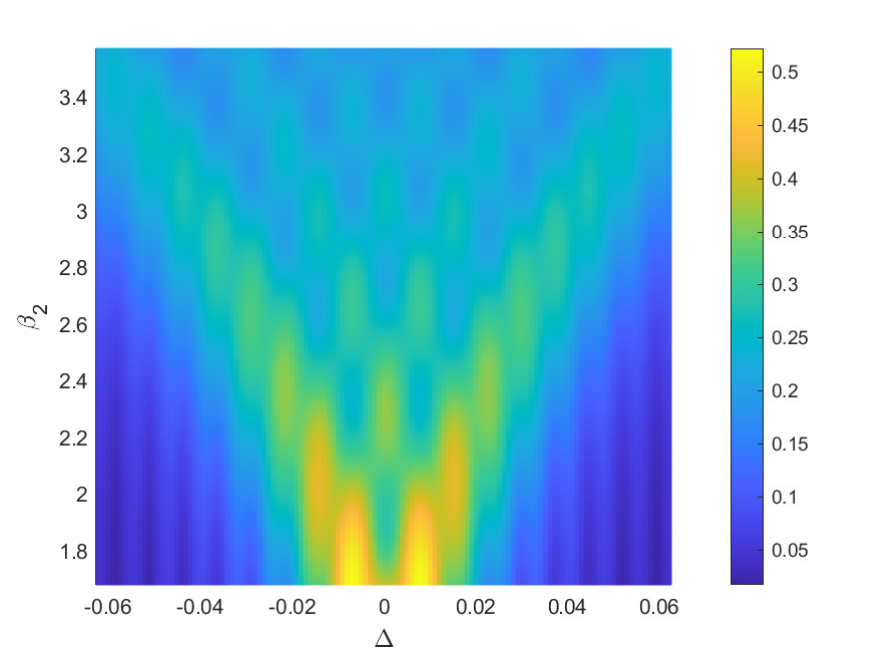}
	\caption{Received beam pattern in one period, where $N=257$, $ U = 16 $ and $f=30$ GHz.}
	\label{fig:energySpread}
\end{figure}
\begin{figure*}[t]
	\includegraphics[width=18.5cm]{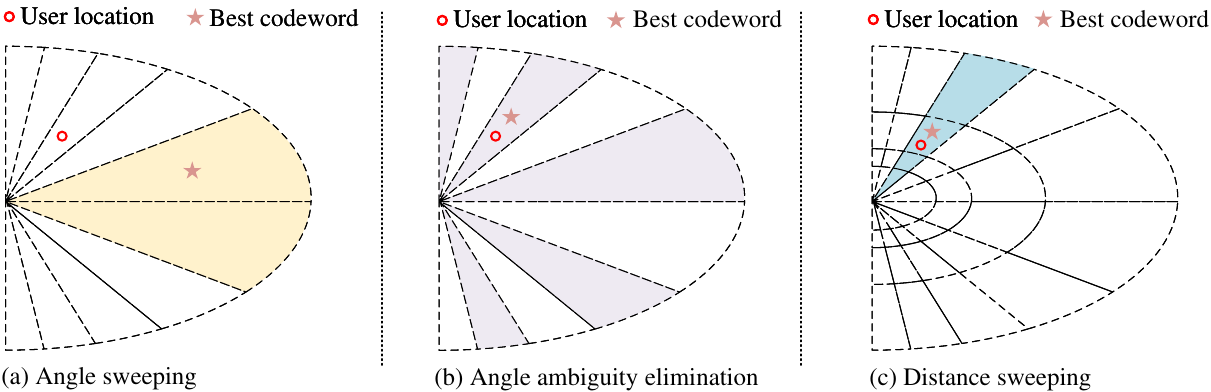}
	\centering
	\caption{Illustration of the proposed three-phase training scheme based on the sparse DFT codebook}
	\label{fig:trainingMethod}
\end{figure*}
\begin{observation}\label{OB1}\emph{In Figs.~\ref{fig:mixedFieldPeriod} and \ref{fig:energySpread}, the energy-spread effects exhibit in the whole radiating near-field region, and the received beam pattern with sampled beamforming vector $ \mathbf{w} $  contains useful user angle information:
		\begin{itemize}
			\item[1)] \textbf{User angle information:} It is observed that the actual user angle approximately locates in the \emph{middle} of the $3$ dB angular support in one period where $ \theta \in {\mathcal{L}_{0}} = [\theta_{0}-1/U, \theta_{0}+1/U] $. Mathematically, we have 
			\begin{equation}
				\theta_{0}\approx {\rm{Med}}(\mathcal{A}_{3}^{\mathcal{L}_0}(\theta_{0}, r_{0})).
			\end{equation}
			Moreover, we define $ u $-th angular support as $ \mathcal{A}_{3}^{\mathcal{L}_u}(\theta_{0}, r_{0}),$ where $ {\mathcal{L}_u} =  {\mathcal{L}_{0} + \frac{2u}{U}} $, $\forall u \in \mathcal{U} \triangleq \{\pm 1, \pm 2,\cdots, \pm (U-1)\}$.
			If we can employ the middle of $ u $-th angular support denoted by $\theta_{u}\approx {\rm{Med}}(\mathcal{A}_{3}^{\mathcal{L}_u}(\theta_{0}, r_{0}))$, the BS can infer that the user angle is among the following candidate angles $\theta_{u}\approx {\rm{Med}}(\mathcal{A}_{3}^{\mathcal{L}_u}(\theta_{0}, r_{0})),$ where $ {\mathcal{L}_{\rm c}} =  {\mathcal{L}_{u} + \frac{2c}{U}} $, $\forall c \in \mathcal{U}$,
			due to the period of $ \frac{2}{U} $.
			\item[2)] Although energy spread effects can provide actual user angle information, it is worth noting that when the user is located near the boundary of Fresnel region (in Fig. \ref{fig:energySpread}, the boundary corresponds to $ \beta_2 = 3.57 $ and $ r_0 = 7.42$m ), the angular support is distorted, which affects the accuracy of the beam training results.
			Moreover, by taking the noise and power fluctuation into account, the actual user angle may slightly deviate from the middle of the angular support.
		\end{itemize}
}
\end{observation}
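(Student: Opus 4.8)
The plan is to establish part 1) --- that $\theta_0$ sits at the midpoint of the one-period $3$-dB angular support --- analytically from the closed form in Lemma~\ref{lemma:LSAgeneral}, by showing that the (approximate) pattern $\hat f(r_0,\theta_0;\theta)\approx|G(\beta_1,\beta_2)|$ is mirror-symmetric about $\theta=\theta_0$ on the period $\mathcal L_0=[\theta_0-1/U,\theta_0+1/U)$; once that is in place, the $3$-dB super-level set $\mathcal A_3^{\mathcal L_0}(r_0,\theta_0)$ is automatically symmetric about $\theta_0$, so ${\rm Med}(\mathcal A_3^{\mathcal L_0}(r_0,\theta_0))=\theta_0$.

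The first step is to freeze the user location $(r_0,\theta_0)$: by \eqref{beta}, $\beta_2=\frac{QU}{2}\sqrt{d_0(1-\theta_0^2)/r_0}$ is then a constant, while $\beta_1=\Delta\sqrt{r_0/(d_0(1-\theta_0^2))}$ is a fixed positive multiple of $\Delta\triangleq\theta-\theta_0$. Thus $\theta\mapsto\hat f(r_0,\theta_0;\theta)$ is, after this affine reparametrization, exactly the single-variable map $\beta_1\mapsto|G(\beta_1,\beta_2)|$, and it suffices to show $G(-\beta_1,\beta_2)=G(\beta_1,\beta_2)$. The key step uses the parity of the Fresnel integrals. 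Since $C$ and $S$ are odd, $\widehat C(-\beta_1,\beta_2)=C(\beta_2-\beta_1)-C(-\beta_1-\beta_2)=C(\beta_1+\beta_2)-C(\beta_1-\beta_2)=\widehat C(\beta_1,\beta_2)$, and identically $\widehat S(-\beta_1,\beta_2)=\widehat S(\beta_1,\beta_2)$; hence $G(-\beta_1,\beta_2)=G(\beta_1,\beta_2)$. Translating back, $\hat f(r_0,\theta_0;\theta_0+\delta)=\hat f(r_0,\theta_0;\theta_0-\delta)$ for $|\delta|<1/U$, i.e., the pattern is invariant under $\theta\mapsto 2\theta_0-\theta$ on $\mathcal L_0$. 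Since $\mathcal A_3^{\mathcal L_0}(r_0,\theta_0)$ is a super-level set of this symmetric function, it is itself invariant under that reflection, so its extreme angles obey $\theta_{\rm right}-\theta_0=\theta_0-\theta_{\rm left}$, giving ${\rm Med}(\mathcal A_3^{\mathcal L_0}(r_0,\theta_0))=\frac12(\theta_{\rm left}+\theta_{\rm right})=\theta_0$. The periodicity in Proposition~\ref{pro:DFT periodicity} then propagates the statement to the shifted supports $\mathcal A_3^{\mathcal L_u}$.

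The delicate point --- and the reason the statement is an ``$\approx$'' rather than an equality, as already flagged in part 2) --- is the gap between $\hat f$ and the exact pattern $f$. Two effects break the symmetry: (i) the Fresnel truncation behind $(a_1)$ in Lemma~1 discards the cubic-in-$q$ phase term, proportional to $(Ud_0)^3q^3\theta_0(1-\theta_0^2)/r_0^2$, which is \emph{odd} in $q$ and is the leading asymmetry-inducing contribution; and (ii) the Riemann-sum $\to$ Fresnel-integral substitution used to obtain \eqref{MixedFieldlosedForm}. I would control (i) by a first-order perturbation of the $3$-dB level set, bounding the induced shift of ${\rm Med}$ by roughly $(D/d_0)^2|\theta_0|(1-\theta_0^2)/r_0^2$ divided by the local slope of $|G|$ at the crossing; this is negligible for $r_0$ well inside the Fresnel region but grows as $r_0\downarrow Z_{\rm F}$, matching the distortion noted in part 2). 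Effect (ii) is handled by Euler--Maclaurin. The noise/power-fluctuation caveat in part 2) is then a separate perturbation of the threshold and of the sampled pattern values; the main obstacle there is that near a flat $3$-dB shoulder the crossing point is ill-conditioned, so any honest bound on the deviation of ${\rm Med}$ from $\theta_0$ must carry the reciprocal of that local slope.
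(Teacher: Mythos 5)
Your argument is correct, and it is genuinely stronger than what the paper actually does. The paper treats this statement as an empirical \emph{observation}: it derives the closed form $\hat f(r_0,\theta_0;\theta)\approx|G(\beta_1,\beta_2)|$ in Lemma~\ref{lemma:LSAgeneral}, notes that $\beta_1\beta_2\approx\frac{D}{2d_0}\Delta$ so that $G$ can be viewed as a function of $\{\Delta,\beta_2\}$ with $\beta_2$ fixed by the user location, and then simply \emph{plots} $|G|$ (Figs.~\ref{fig:mixedFieldPeriod} and \ref{fig:energySpread}) to read off that $\theta_0$ sits in the middle of the $3$-dB support; no symmetry argument is given. You supply exactly the missing analytic step: since $C$ and $S$ are odd, $\widehat C(-\beta_1,\beta_2)=C(\beta_2-\beta_1)-C(-\beta_1-\beta_2)=\widehat C(\beta_1,\beta_2)$ and likewise for $\widehat S$, so $|G(\cdot,\beta_2)|$ is even in $\beta_1\propto\Delta$, the super-level set $\mathcal A_3^{\mathcal L_0}$ is reflection-symmetric about $\theta_0$, and $\tfrac12(\theta_{\rm left}+\theta_{\rm right})=\theta_0$ exactly at the level of the approximation $\hat f$; Proposition~\ref{pro:DFT periodicity} then carries this to every $\mathcal L_u$. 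Your accounting of why the result is only approximate for the true pattern $f$ --- the discarded cubic phase term, odd in $q$ and proportional to $\theta_0(1-\theta_0^2)/r_0^2$, plus the sum-to-integral error --- is also consistent with, and more precise than, the paper's qualitative remark in part 2) that the support distorts near the Fresnel boundary. What your route buys is a guarantee valid for \emph{all} $(r_0,\theta_0)$ rather than the finite set of $\beta_2$ values shown in the figures, together with an identification of exactly which effects (cubic phase, discretization, noise) perturb the median away from $\theta_0$; what the paper's numerical route buys is a check against the exact pattern $f$ rather than the surrogate $\hat f$, which your symmetry argument alone does not provide. The one caveat worth stating explicitly if you formalize this: the evenness holds for $\hat f$, so your perturbation bounds on the median shift (and their blow-up where the $3$-dB crossing has small slope) are not optional decoration but the substantive content of the ``$\approx$''.
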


\textbf{Observation}~\ref{OB1} indicates that the actual user angle can be estimated by finding the middle of the angular support in a period.
In other words, the sparse DFT codebook within the angle range of a period (for example $ \theta \in [-1/U,1/U)$) contained all information for which the BS can infer $ U $ candidate angles for actual user angle.

\section{Proposed Beam Training Scheme Enabled by Sparse DFT Codebook}
\label{Sec:scheme2}
In this section, we propose a three-phase near-field beam training method enabled by the sparse DFT codebook. Then, an optimization problem is formulated to minimize the beam training overhead.
\subsection{Phase 1: Beam Sweeping with the Sparse DFT codebook}
In Section~\ref{Sec:DFTbeamPattern}, we show that the received beam pattern with the sparse DFT codebook exhibits a period of $ 2/U $ and the user angle information can be inferred from the middle of the angular support within the region $ \mathcal{L}_0 $.

This motivates us to perform angular sweeping in a period of $ \theta $ (steering beams varying from $ -1/U $ to $ 1/U $) to estimate the potential user angles in the first phase, thereby further decreasing the beam training overhead. Considering that the beam width of the sparse far-field beamforming vector \eqref{eq:SLA steering vector} is $ \frac{4}{QU} $ \cite{wang2023can}, we sample the angular space as 
$$ \theta_s = \frac{2s - QU - 1}{QU}, s = 1,\cdots,QU. $$
Then, the sparse DFT codebook in the first phase to perform the angular sweeping is given by
\begin{center}
	\begin{framed}
		{\setlength\abovedisplayskip{2pt}
			\setlength\belowdisplayskip{2pt}
			\begin{equation} 
			\label{Eq:sampled DFT Codebook}
			{{\mathcal{W}}}_{\rm{DFT}}\!\! =\!\! \{{{\mathbf{w}}}_{\frac{QU-Q+1}{2}},\!\cdots\!,{{\mathbf{w}}}_g,\cdots,{{\mathbf{w}}}_{\frac{QU+Q-1}{2}} \},
			\end{equation} 
	}\end{framed}
\end{center}
where $ {{\mathbf{w}}}_g = \mathbf{a}_{\rm SLA}(\theta_g) $, $ \theta_g =  \frac{2g - QU - 1}{QU}$ and the index $ g \in \mathcal{G} \triangleq \{ \frac{QU-Q+1}{2} ,\cdots, \frac{QU+Q-1}{2}  \} $.

Then, the BS sequentially transmits $ Q $ pilot symbols with the sparse DFT codebook in \eqref{Eq:sampled DFT Codebook}, while it tunes beam angles varying from $ -1/U $ to $ 1/U $ as illustrated in Fig. \ref{fig:trainingMethod}(a). For each codeword, the received signal power at the user is given by 
\begin{equation}
	p(\mathbf{w}_g) = | \sqrt{Q}\beta \mathbf{b}^{H}(r_{0},\theta_{0})\mathbf{w}_g x + z|^2, ~\forall g\in\mathcal{G}.
\end{equation}

However, it is worth noting that when we transmit sparse far-field beamforming vectors varying from $ -1/U $ to $ 1/U $, the received beam pattern may exhibit a \emph{shifted} angular support within two periods instead of a whole angular support in one period as illustrated in Fig. \ref{fig:DFTbeamTraining}.
To obtain a regular angular support, the user needs to perform the received-beam-pattern shifting based on the index of the codeword $ \mathbf{w}_g $ with the lowest power.
We denote the codeword with the lowest power as $ \mathbf{w}_{\ell} $ corresponding to the angle $ \theta_{\ell} $.  
Specifically, we shift the angle larger than $ \theta_{\ell} $ by one period ($ 2/U $) as illustrated in Fig. \ref{fig:DFTbeamTraining}.
Then, the shifted indices of the codewords are arranged in a vector
$\mathbf{s} = [ {\ell-Q+1},\cdots, {\ell-Q+2},\cdots, \ell ]$.
Moreover, the equivalent received power of the shifted codewords can be recast as
\begin{equation*}
	\begin{aligned}
		\mathcal{P} = \{&p(\mathbf{w}_{\ell+1}),p(\mathbf{w}_{\ell+2}),\cdots,p(\mathbf{w}_{(QU+Q-1)/2}),\\
		&p(\mathbf{w}_{(QU-Q+1)/2}),p(\mathbf{w}_{(QU-Q+3)/2}),\cdots, p(\mathbf{w}_{\ell}) \}.
	\end{aligned}
\end{equation*} 
As such, we obtain a complete and regular angular support, which involves the user's angle information based on \textbf{Observation} \ref{OB1}. Specifically, the indices of shifted codewords with significantly high power received by the users are given by
\begin{equation}\label{Eq:phi}
	\mathcal{S} = \{\mathbf{s}_{\delta} |  p({\mathbf{w}_{\delta}})  > \kappa  \max \mathcal{P} \},
\end{equation}
where $\kappa \approx 0.5$. Then, the estimated angle is given by $ \theta_{\check{s}} = \frac{2\check{s} - QU - 1}{QU} $, where $ \check{s} = \rm{Med}(\mathcal{S}) $ and the corresponding codeword is $ \mathbf{w}_{\check{s}} $. 
Then, according to the periodicity, the BS can infer $ U $ candidate user angles, which are given by
\begin{equation}
\label{eq:candidate user angle}
	\theta_{\rm c} = {\theta}_{\check{s}}+\frac{2u}{U}, \forall u \in \mathcal{U}.
\end{equation}
\subsection{Phase 2: Angular Ambiguity Elimination}
\label{sec:phase2}
In the second phase, we propose an efficient method to resolve the angular ambiguity. The key idea is to utilize a central subarray to sequentially examine the candidate angles in \eqref{eq:candidate user angle}.
Specifically, we activate a central subarray with $ M $ antennas to eliminate the angular ambiguity, for which the codebook $ {{\mathcal{V}}}_{\rm{Sub}} $ is presented as follows
\begin{center}
	\begin{framed}
		{\setlength\abovedisplayskip{2pt}
			\setlength\belowdisplayskip{2pt}
			\begin{equation} 
			\label{Eq:sub-array DFT Codebook}
			{{\mathcal{V}}}_{\rm{Sub}}\!\! =\!\! \{{{\mathbf{v}}}_{1},\!\cdots\!,{{\mathbf{v}}}_s,\cdots,{{\mathbf{v}}}_{QU} \},
			\end{equation} 
	}\end{framed}
\end{center}
where $ {\mathbf{v}}_{s}$ is given by
\begin{equation}
{\mathbf{v}}_{s} =\!\!\frac{1}{\sqrt{M}}\bigg[\underbrace{0,\cdots, 0}_{\frac{N-M}{2}},e^{\jmath {\pi}\frac{N-M}{2}\theta_s},\!\cdots\!,e^{\jmath{\pi}\frac{N+M-2}{2}\theta_s},\underbrace{0,\cdots, 0}_{\frac{N-M}{2}},\bigg]^H\!\!.
\label{eq4}
\end{equation}
\begin{figure}
	\centering
	\includegraphics[width=\columnwidth]{./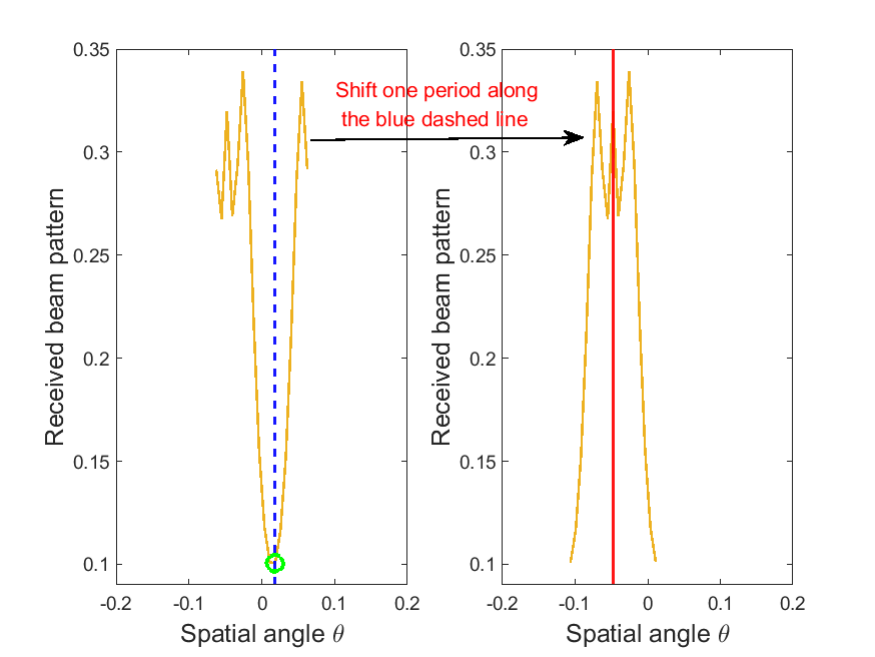}
	\caption{The shift of the received beam pattern with the sparse DFT codebook in a period, where $N=257$, $ U = 16 $ and $f=30$ GHz. The actual user angle is $ \theta_0 = 0.2 $}
	\label{fig:DFTbeamTraining}
\end{figure}

It is noteworthy that there are two criteria for ensuring the effectiveness of angular ambiguity elimination.
The first one is that the user should be located in the far-field region of the activated central subarray to avoid the energy-spread effect.
Moreover, considering that the interval between two adjacent candidate angles is $ \frac{2}{U} $, extra angular ambiguity will be introduced if the beam width of the central subarray is wider than $ \frac{4}{U} $.
Therefore, the second criterion is that the beam width of the central subarray is required to be smaller than $ \frac{4}{U} $, which ensures no interference between two adjacent candidate angles.

To satisfy the first criterion, we set the Rayleigh distance of the subarray to be less than the Fresnel distance of the XL-array, which is given by
\begin{equation}
\label{eq:far-field condition}
\frac{2M^2d_0^2}{\lambda} \le 1.2D.
\end{equation}
Then, \eqref{eq:far-field condition} can be simplified as $ M \le \sqrt{1.2{(N-1)}} $.

For the second criterion, we have 
\begin{equation}
	 \frac{4}{M} \le \frac{4}{U},
\end{equation}
which can be simplified to $ M \ge U $.
Therefore, the number of antennas of the central subarray needs to satisfy
\begin{equation}
\label{antenna numer condition}
U \le	M \le \sqrt{{1.2(N-1)}}.
\end{equation}

Moreover, from \eqref{antenna numer condition}, the activation interval $ U $ has the following constraint $ U \le \sqrt{1.2{(N-1)}} $.
For example, given a setup where $ N = 257 $ and $ U = 16 $, we have $16 \le M \le 17.5 $. In Fig. \ref{fig:subArrayIllustration}, we plot the beam width of the subarray with $ M = 17 $.
It is observed that two adjacent candidate user angles have no considerable mutual interference, which indicates the effectiveness of angular ambiguity elimination.

\begin{figure}
	\centering
	\includegraphics[width=\columnwidth]{./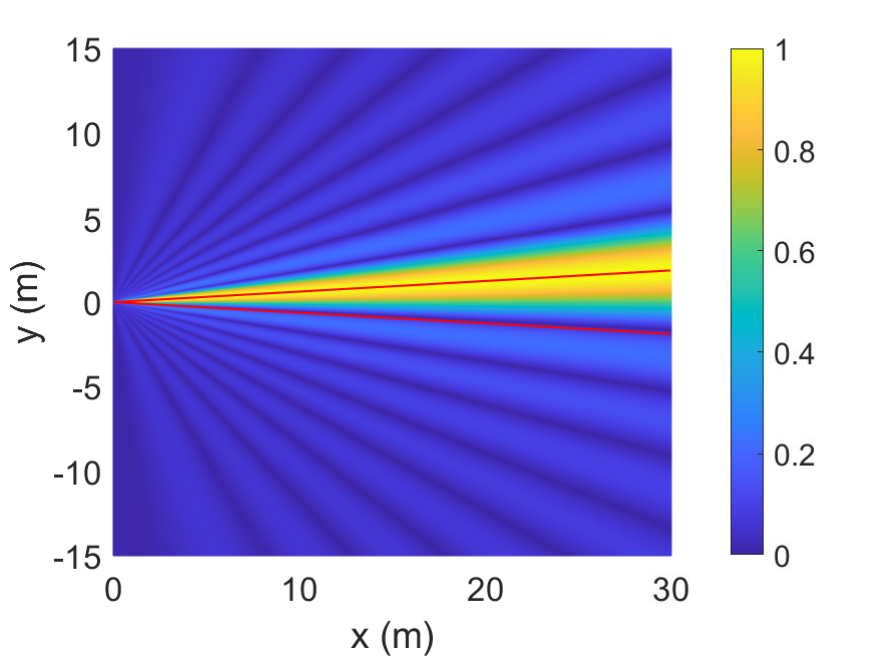}
	\caption{Illustration of angle ambiguity eliminated by the central subarray, where $N=257$, $ M = 17 $ and $f=30$ GHz. The beam is steered towards an candidate angle $ 1/16 $. The red lines are two adjacent candidate user angles $ -1/16 $ and $ 1/16 $.}
	\label{fig:subArrayIllustration}
\end{figure}
Based on the above, the BS activates a central subarray comprising $ M $ antennas satisfying \eqref{antenna numer condition} to resolve the angular ambiguity.
For the candidate user angles $ {\theta}_{\check{s}} + u\frac{2}{U}, \forall u \in \mathcal{U}$ in \eqref{eq:candidate user angle}, the codewords chosen from \eqref{Eq:sub-array DFT Codebook} to resolve the angular ambiguity are $ \mathbf{v}_{\check{s}+uQ}, \forall u\in \mathcal{U} $.
Then, the BS sequentially transmits $ U $ pilot symbols with beamforming vectors $ \{ \mathbf{v}_{\check{s}+uQ} \} $ as illustrated in Fig. \ref{fig:trainingMethod} (b).
For each codeword, the received signal power at the user in the second phase is given by 
\begin{equation}
	\label{eq:power teo-phase}
	p(\mathbf{v}_{\check{s}+uQ}) = | \sqrt{M}\beta \mathbf{b}^{H}(r_{0},\theta_{0})\mathbf{v}_{\check{s}+uQ} x + z|^2, ~\forall u\in\mathcal{U}.
\end{equation}

Then, we can obtain the optimal codeword $ \mathbf{v}_{s^{\ast}} $ via simple comparisons of received signal power in \eqref{eq:power teo-phase}, which is given by
\begin{equation}
	{s}^\ast =\arg \max _{u \in \mathcal{U}} p(\mathbf{v}_{\check{s}+uQ}).
\end{equation}
Then, the estimated user angle can be obtained by $ \theta_{s^{\ast}}= \frac{2s^{\ast} - QU - 1}{QU}$.
\subsection{Phase 3: Beam Sweeping with Polar-domain Codebook}
\label{sec:phase 3}
Once we determine the user angle $ \theta_{s^{\ast}} $ in the second phase, we can use the polar-domain codebook to perform the range domain sweeping, achieving beamfocusing gain in the near-field.
The polar-domain codebook utilized in the third phase is given by
\begin{equation}
	\label{eq:polar-domain codebook}
	\bar{\mathcal{X}}_{\rm{Pol}} = \{{\mathcal{X}}_1,\cdots,{\mathcal{X}}_s,\cdots,{\mathcal{X}}_{QU}\},
\end{equation} 
where $ {\mathcal{X}}_s = \{{\mathbf{x}}_{s,1},\cdots,{\mathbf{x}}_{s,v},\cdots, {\mathbf{x}}_{s,V}\} $ and $ {\mathbf{x}}_{s,v} = \mathbf{b}^{H}(r_{s,v},\theta_{s})$ with $ r_{s,v}=\frac{1}{v} \alpha_{\Delta}\left(1-\theta_{s}^2\right)$  \cite{nf_exhaustive}.
Moreover, $ V $ denotes the number of range samples.

Specifically, the codewords employed in the third phase are given by $ \mathcal{X}_{s^{\ast}} = \{{\mathbf{x}}_{s^{\ast},1},\cdots,{\mathbf{x}}_{s^{\ast},v},\cdots, {\mathbf{x}}_{s^{\ast},V}\} $. 
Then, the BS activates the whole XL-array and sequentially transmits training $ V $ symbols in the estimated user angle $ \theta_{s^{\ast}} $ with codewords $ \mathcal{X}_{s^{\ast}} $ as illustrated in Fig \ref{fig:trainingMethod}(c).
For each codeword, the received signal power at the user is given by 
\begin{equation}
	\label{eq:power third-phase}
	p({\mathbf{x}}_{s^{\ast},v}) = | \sqrt{N}\beta \mathbf{b}^{H}(r_{0},\theta_{0}){\mathbf{x}}_{s^{\ast},v} x + z|^2, ~\forall v\in\mathcal{V}.
\end{equation}
Subsequently, the optimal codeword is determined through straightforward comparisons of received signal power in \eqref{eq:power third-phase} and the best codeword index is given by
\begin{equation}
	{v}^\ast =\arg \max _{v \in \mathcal{V}} p({\mathbf{x}}_{s^{\ast},v}).
\end{equation}
Therefore, we can obtain the user location $ (r_{{s}^\ast,{v}^\ast}, \theta_{{s}^\ast}) $ with $ r_{{s}^\ast,{v}^\ast} = \frac{1}{{v}^\ast} \alpha_{\Delta}\left(1-\theta_{{s}^\ast}^2\right) $.

The detailed procedures of the proposed three-phase beam training method are summarized in Algorithm \ref{Algorithm:three-phase}.
\begin{algorithm}
	\caption{Proposed Three-phase Beam Training Method}
	\label{Algorithm:three-phase}
	\begin{algorithmic}[1]  
		\STATE{\textbf{Phase 1: Angular sweeping in the subspace.}} 
		\STATE {Use the sparse DFT codebook $ \mathcal{W}_{\rm DFT} $ for the beam sweeping in the angular subspace $ [-1/U, 1/U) $.}
		\STATE {Perform the received-beam-pattern shifting to obtain a regular angular support $ \mathcal{P}$.}
		\STATE {Obtain $ U $ candidate user angles $ \theta_{\rm c} = {\theta}_{\check{s}}+\frac{2u}{U}, \forall u \in \mathcal{U} $ according to the middle of the angular support $ \mathcal{P}$.}
		\STATE{\textbf{Phase 2: Resolve angular ambiguity.}} 
		\STATE {Activate a central subarray with $ M $ antennas and use codebook $ {{\mathcal{V}}}_{\rm{Sub}} $ to examine $ U $ candidate angles estimated in the first phase.}
		\STATE {Obtain the best user angle $ \theta_{s^\ast} $ according to the highest power received at the user with respective to $ p(\mathbf{v}_{\check{s}+uQ}) $.}
		\STATE{\textbf{Phase 3: Range sweeping.}} 
		\STATE{Use polar-domain codebook $ \bar{\mathcal{X}}_{\rm{Pol}} $ with codewords $ \mathcal{X}_{s^{\ast}} $ to sweep the range domain in the user angle $ \theta_{s^\ast} $}.
		\STATE {Obtain the best user range $ r_{s^\ast,v^\ast} $ according to the highest power received at the user with respective to $ p({\mathbf{x}}_{s^{\ast},v}) $.}
	\end{algorithmic}
\end{algorithm}
\subsection{Discussions }
\label{sec:discussion}
{\underline{\textbf{Beam training overhead:}}}
In the first phase, the overhead is $ Q = \frac{N-1}{U}+1 $.
Moreover, the overhead of the second and third phases are $ U $ and $ V $, respectively.
Finally, the overall overhead of the proposed beam training scheme is $T^{\rm (3P)} =  Q+U+V $.
Given $ Q = \frac{N-1}{U}+1 \in \mathbb{Z}^{+} $ with $ \mathbb{Z}^{+} $ denoting the positive integer set, the beam training overhead of the proposed method enabled by the sparse DFT codebook can be recast as 
\begin{equation}
	\label{Eq:multi-beam training overhead}
	T^{\rm (3P)} = \frac{N-1}{U} + U + V + 1.
\end{equation}
It can be easily obtained from \eqref{Eq:multi-beam training overhead} that as $ U $ increases, the overhead during the first phase decreases, while it increases during the subsequent phase, and vice versa. Thus, there is a fundamental trade-off between the first and second phase of the beam training method depending on the value of $ U $.
Then, we aim to minimize the beam training overhead $ T^{\rm (3P)}  $ and the optimization problem is formulated as
\begin{subequations}
		\begin{align}
			({\bf P1}):~~~& \min_U~ F(U) = \frac{N-1}{U} + U +V+1 \nn\\
			&~~{\text{s}}{\text{.t}}{\rm{. }} ~ U \le \sqrt{{1.2(N-1)}},\\
			&~~~~~~U \in \mathcal{F} \triangleq \{\frac{N-1}{U}\!+\!1\in \mathbb{Z}^{+}, U \in \mathbb{Z}^{+}\} \label{Eq:interger constraint}.
		\end{align}
\end{subequations}If we remove the integer constraint \eqref{Eq:interger constraint}, $ ({\bf P1}) $ is a convex problem. Then, Problem $ ({\bf P1}) $ can be rewritten as 
\begin{subequations}
	\begin{align}
	({\bf P2}):~~~& \min_U~ F(U) = \frac{N-1}{U} + U +V+1 \nn\\
	&~~{\text{s}}{\text{.t}}{\rm{. }} ~ U \le \sqrt[4]{1.2{(N-1)}}.
	\end{align}
\end{subequations}

Problem ({\bf P2}) is an convex optimization problem, whose optimal solution can be easily obtained as follows due to convexity.
\begin{lemma}
{\rm
	The optimal solution to Problem ({\bf P2}) is given by {\small$ U^\ast_{P_2} = \sqrt{(N-1)} $}. Moreover, the overhead of the proposed method is {\small$ T^{(\rm 3P)} = 2\sqrt{(N-1)}+V+1 $}.
}
\end{lemma}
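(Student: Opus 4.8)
The plan is to treat Problem $({\bf P2})$ as a one‑dimensional convex program and pin down its minimizer in closed form via the AM--GM inequality (equivalently, a first‑order stationarity condition), and then to verify that this minimizer is feasible so that the constraint may be ignored.

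First I would establish convexity: on the domain $U>0$ the term $\frac{N-1}{U}$ is convex (since $N>1$), $U$ is affine, and $V+1$ is a constant, so $F(U)$ is convex and any stationary point is its global minimizer. To locate it I would either set $F'(U)=1-\frac{N-1}{U^2}=0$ and take the positive root $U=\sqrt{N-1}$, or invoke AM--GM directly, $\frac{N-1}{U}+U\ge 2\sqrt{(N-1)}$ with equality exactly when $\frac{N-1}{U}=U$, i.e. $U=\sqrt{N-1}$. Either route gives the unconstrained optimal value $F(\sqrt{N-1})=2\sqrt{(N-1)}+V+1$.

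Next I would check feasibility of the candidate $U=\sqrt{N-1}$ against the aperture/far‑field constraint $U\le\sqrt{1.2(N-1)}$ inherited from $({\bf P1})$: since $\sqrt{1.2(N-1)}=\sqrt{1.2}\,\sqrt{N-1}>\sqrt{N-1}$ for every $N>1$, the stationary point is strictly feasible, so the constraint is inactive and $U^\ast_{P_2}=\sqrt{N-1}$ is the optimum of $({\bf P2})$. Substituting this back into $T^{(\rm 3P)}=\frac{N-1}{U}+U+V+1$ yields $T^{(\rm 3P)}=2\sqrt{(N-1)}+V+1$, as claimed.

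The argument has no real obstacle — it is a textbook convex/AM--GM computation. The only points that deserve care are (i) confirming that the stationary point lies inside the interval permitted by $({\bf P1})$/$({\bf P2})$, which holds precisely because the bound carries the factor $1.2>1$, and (ii) noting that the integer/divisibility condition $U\in\mathcal{F}$ of $({\bf P1})$ has been deliberately relaxed here, so $U^\ast_{P_2}$ should be understood as the relaxed optimum that subsequently guides the choice of an admissible integer activation interval $U$.
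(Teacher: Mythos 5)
Your proposal is correct and follows essentially the same route as the paper's own proof: the AM--GM bound $\frac{N-1}{U}+U\ge 2\sqrt{N-1}$ with equality at $U=\sqrt{N-1}$, followed by the feasibility check $\sqrt{N-1}\le\sqrt{1.2(N-1)}$. Note that you (like the paper's own proof) implicitly read the constraint of $({\bf P2})$ as $U\le\sqrt{1.2(N-1)}$ inherited from $({\bf P1})$ rather than the fourth root $\sqrt[4]{1.2(N-1)}$ as literally printed in $({\bf P2})$, which appears to be a typo there; under the literal fourth-root constraint the stationary point would be infeasible for large $N$, so the square-root reading is the intended one.
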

\begin{proof}
	It is observed that {\small $ \frac{N-1}{U} + U +V+1 \ge 2\sqrt{N-1}+V+1 $, where the equal holds with $ U = \sqrt{(N-1)} $}.
	Moreover, considering $ U = \sqrt{(N-1)} \le  \sqrt{1.2{(N-1)}}$, the optimal solution to Problem $ ({\bf P2}) $ is  {\small$ U^\ast_{P_2} = \sqrt{(N-1)} $} with the beam training overhead {\small$ T^{(\rm 3P)} = 2\sqrt{(N-1)}+V+1 $}.
\end{proof}

When the integer constraint \eqref{Eq:interger constraint} is taken into account, a suboptimal solution to Problem ({\bf P1}) is given by
\begin{equation}
\hat U = \arg \min\limits_{\mathcal{F}}\{  |U^{\ast}_{\rm P_2} - f|, f \in  \mathcal{F} \}.
\end{equation}

Next, we present an example to illustrate the low overhead feature of the proposed beam training method.
We consider a setup with $ N = 1025 $ and $ V = 5 $. Then, $ U =32 $ can minimize $ T^{(\rm 3P)} $ according to the optimal solution to Problem $ ({\bf P1}) $.
The beam training overhead of the proposed multi-beam training scheme in this setup is $ T^{(\rm 3P)} = 2\sqrt{N-1}+V+1 = 64+5+1=70 $, which is significantly smaller than that of the exhaustive-search method ($ T^{(\rm EX)} = 5280 $) and the two-phase near-field beam training method ($ T^{(\rm 2P)} = 1061 $).

\begin{remark}[Improved scheme: Middle-$K$ angle selection]\label{Re:SelectionK}
	\emph{
		Due to the power fluctuation \cite{two_phase}, the middle angle of the angular support may not be accurate. 
		To improve the estimation accuracy, we can select the \emph{middle-$ K $ angles} of the angular support instead of selecting one potential angle in the first procedure.
		Specifically, in the third phase, we should perform the beam sweeping in the range domain in the $ K $ potential angles $ \theta_{\bar{s}_k}, k = 1,\cdots,K $ to determine an optimal polar-domain codeword.
		The overhead of the middle-$ K $ angle scheme is given by $ T^{(\rm 3P)}_{{\rm Mid}-K} = \frac{N-1}{U} + U + KV$.
		It is noteworthy that this does not significantly increase the beam training overhead, which is still proportional to $ \sqrt{N} $.
	}
\end{remark}

\begin{remark}[Estimation error] \label{remark:estimation_error}
	\emph{From the above analysis, the accuracy of the proposed beam training scheme enabled by sparse DFT codebook is mainly dependent on two factors.
	First, the sampling resolution is a key factor because the proposed beam training scheme is an on-grid channel estimation method.
	Therefore, when the sampling interval is small enough, the performance of proposed multi-beam training scheme approaches that of the optimal beamformer without noise taking into account.
	Second, noise is another key issue.
	The estimated angle is derived from the received signal power in different time, consequently influenced by the received SNR. 
	Hence, a higher SNR is expected to achieve more accurate angle estimation,
	which will be numerically verified in Section \ref{Sec:numericalResults}. 
}
\end{remark}

\begin{remark}[Multi-path channels]
	{\rm
		With respect to multi-path channel cases, we divide the extension of the proposed algorithm into two cases according to different Rician parameters.
		\begin{itemize}
			\item \textbf{LoS-dominant channel:} When the Rician factor is large enough (or equivalently the LoS path is dominant), we can regard the NLoS components as environmental noise.
			Therefore, the proposed beam training scheme based on the sparse DFT codebook still holds as this method only depends on the LoS path component.
			\item \textbf{Comparable multi-path components:} This case is much more complicated. Considering that the accuracy depends on the periodic angular supports, NLoS components may bring about randomly overlapped received beam pattern in the angular domain, which poses challenges to the angle estimation in the first phase of our proposed beam training scheme.
			Therefore, the case with comparable multi-path components is left as a topic for our future works.
		\end{itemize}
	}
\end{remark}

\begin{remark}[Universal in both near- and far-field communications] \emph{The proposed beam scheme can be applied to both near- and far-field communications.
We can identify the near- or far-field user according to the angular support width \cite{wuxun}.
Specifically, in cases where the angular support width is small (e.g., only contains one candidate angle), it signifies that the user is located in the far-field region.
Then, there is no need to perform the range estimation in the third phase (see Section \ref{sec:phase 3}).
In other words, for far-field users, only the first two phases of the proposed beam training scheme need to be executed.
}  
\end{remark}
\begin{figure}
	\centering
	\includegraphics[width=\columnwidth]{./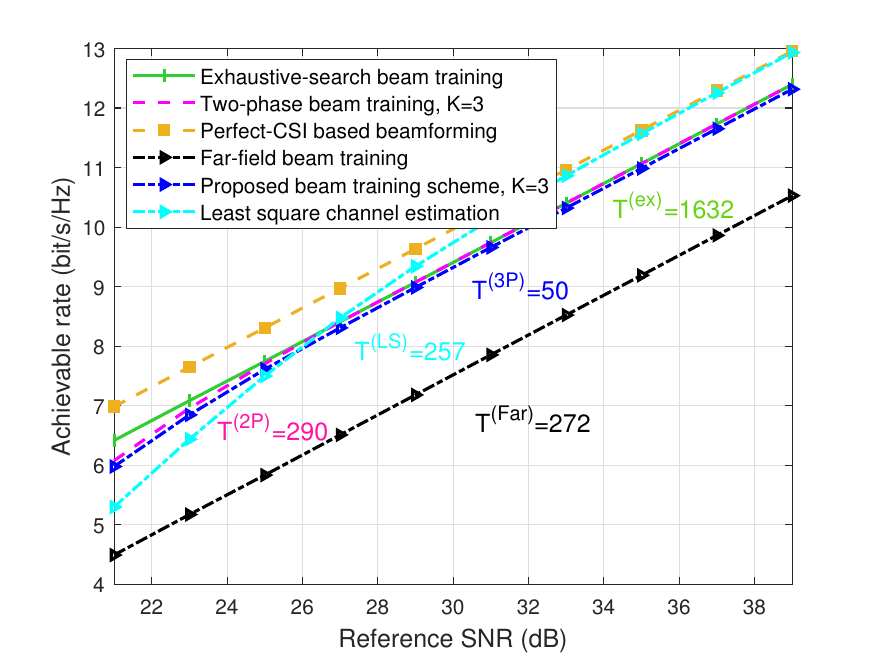}
	\caption{Achievable rate versus SNR.}
	\label{fig:Achievable rate versus SNR}
\end{figure}
\section{Numerical Results}
\label{Sec:numericalResults}
Numerical results are provided to validate the effectiveness of the proposed near-field beam training scheme in this section.
We first present the system parameters and benchmark schemes followed by the performance comparison under numerous setups.
\subsection{System Setup and Benchmark Schemes}
We set the system parameters as follows.
We consider that the XL-array is equipped with $ N = 257 $ antennas and $ f = 30 $ GHz.
The transmit power and reference channel gain at $ 1 $ m are set as $ P_{\rm tol} = 30 $ dBm and $ \beta_0 = (\frac{4\pi}{\lambda})^2  = -62 $ dB, respectively.
Moreover, the noise power is $ \sigma^2 = -80 $ dBm.
According to $ ({\bf P1})  $, the optimal activation interval is set as $ U = 16 $.
Furthermore, the antenna number of the activated central subarray is $ M = 17 $.
With respective to NLoS paths, we set $ L = 2$ and $ \kappa_k = 30$ dB \cite{dl,chirp_hier}.
The reference SNR for a user is defined by $ \gamma = \frac{NP_{\rm tol} \beta_0}{r_{\rm 0}^2 \sigma^2} $ \cite{two_phase}.
To characterize the overhead, we assume that the total transmission time and a pilot symbol time are $ T_{\mathrm{tol}} = 0.2 $ ms and $ T_{\mathrm{s}} = 0.1 $ \si{\micro\second} \cite{liu2024near}, respectively.
Then, the \emph{effective} rate is defined by $ R_{\mathrm{Eff}}=\left(1-\frac{T_{\rm over}T_{\mathrm{s}}}{T_{\mathrm{tol}}}\right) R $, where $ T_{\rm over} $ denotes the beam training overhead of each scheme.
All the numerical results are averaged over 1000 channel realizations.
The following benchmark schemes are considered for performance comparison:
\begin{figure}
	\centering
	\includegraphics[width=\columnwidth]{./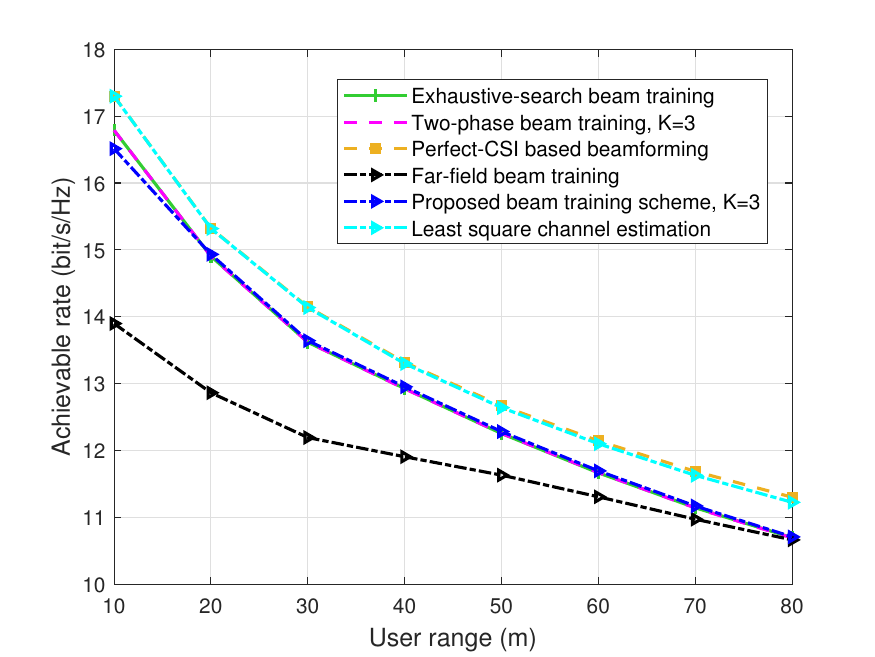}
	\caption{Achievable rate versus user range.}
	\label{fig:Achievable rate versus user range}
\end{figure}
\begin{itemize}
	\item {\rm \textbf{Perfect-CSI based beamforming:}}
	This scheme assumes that the BS perfectly aligns the user near-field channel and beamforming vector is set by $ \bar{\mathbf{w}} =  \mathbf{b}(r_{0},\theta_{0})$
	Obviously, this scheme is the performance upper bound for all methods.
	\item {\rm \textbf{Least square channel estimation:}}
	This scheme is a classic off-grid channel estimation method where the user estimates the channels by $ N $ pilot symbols transmitted by the BS in the downlink.
	The estimated channel is given by $ \hat{\mathbf{h}}_{\mathrm{LS}}=\left(\mathbf{X}^{\mathrm{H}} \mathbf{X}\right)^{-1} \mathbf{X}^{\mathrm{H}} \mathbf{y} $, where $ \mathbf{X} \in \mathbb{C} ^{N \times N}$ denotes the pilot matrix for each user and $ \mathbf{y} = \mathbf{X}\mathbf{h}+\mathbf{z} \in\mathbb{C}^{N \times 1} $ represents the received signal vector with $ \mathbf{z} \sim \mathcal{C N}\left(\mathbf{0}, \sigma^2 \mathbf{I}\right) $. 
	Moreover, $ \mathbf{h} $ is shown in \eqref{Eq:nf-channel}.
	Obviously, the pilot overhead of this scheme is $ T^{({\rm LS})} = N $.
	\item {\rm \textbf{Exhaustive-search beam training:}}
	This scheme is detailed in Section \ref{Sec:Exhaustive Search}.
	Due to the different angle sampling interval, the overhead of this scheme is modified by $T^{\rm{(ex)}}=QUV$.
	\item {\rm \textbf{Two-phase beam training:}}
	This scheme is detailed in Section \ref{Sec:Two phase}.
	Due to the different angle sampling interval, the overhead of this scheme is revised by $T^{\rm{(2P)}} = QU+KV$.
	\item {\rm \textbf{Far-field beam training based on DFT codebook:}}
	Conventional DFT codebook is used to sweep the whole angular domain for choosing a best codeword for which the maximum received signal power is achieved at the user.
	The beam training overhead of this scheme is $ T^{\rm (Far)} = QU $.
\end{itemize}
\begin{figure}
	\centering
	\includegraphics[width=\columnwidth]{./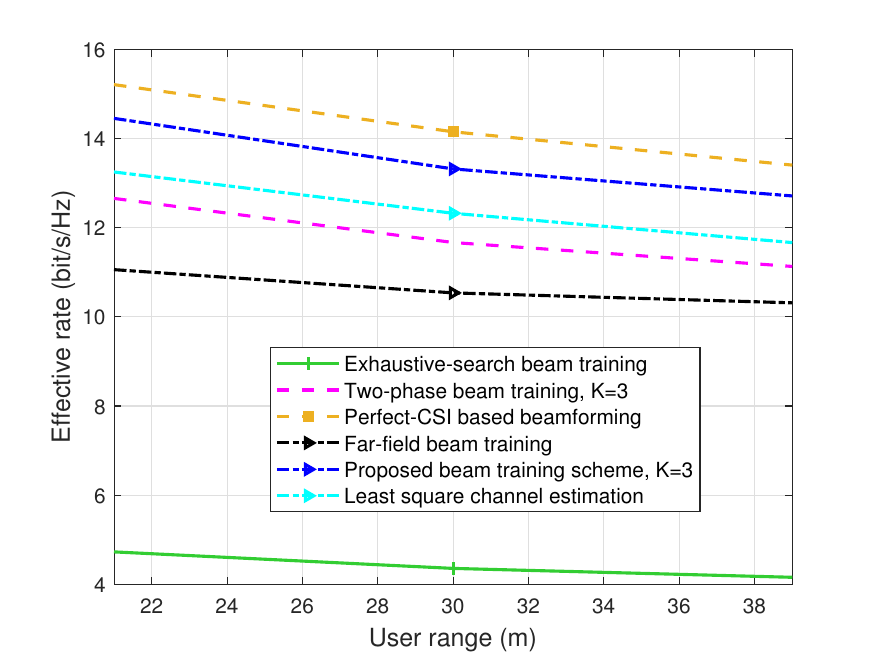}
	\caption{Effective rate versus user range.}
	\label{fig:Effective rate versus user range}
\end{figure}
\begin{figure}
	\centering
	\includegraphics[width=\columnwidth]{./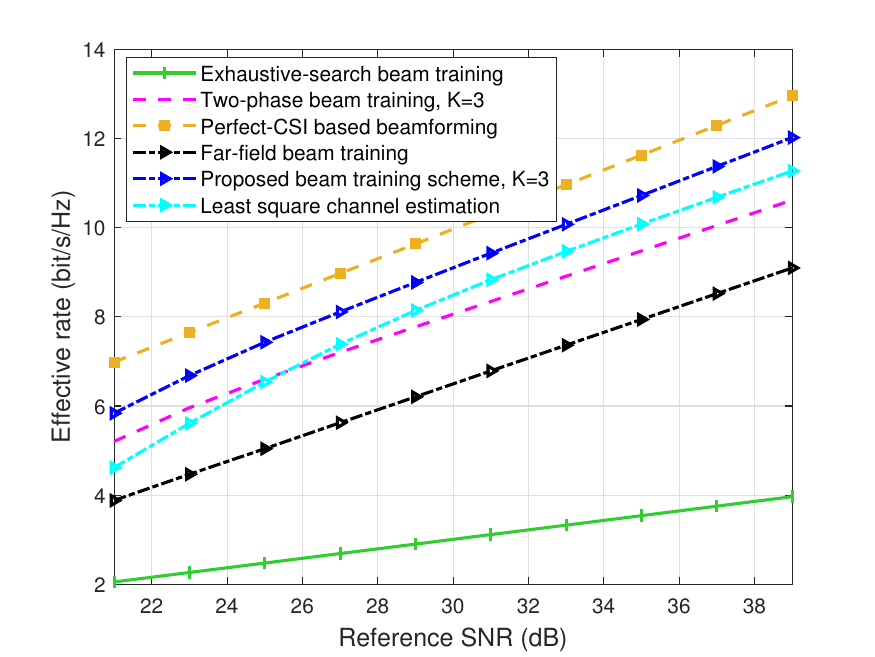}
	\caption{Effective rate versus SNR.}
	\label{fig:Effective rate versus SNR}
\end{figure}
\subsection{Performance Analysis}
In Fig. \ref{fig:Achievable rate versus SNR}, we plot the achievable rate $R$ versus the reference SNR $\gamma$ under different beam training schemes. Some key observations can be concluded as follows. First, our proposed near-field beam training scheme achieves very close performance to the two-phase beam training and exhaustive search based beam training scheme, especially in the high-SNR regime (i.e., larger than 26 dB). Second, with the decrease of the reference SNR, the achievable rate attained by off-grid channel estimation degrades more dramatically than other schemes, and becomes inferior to our proposed scheme when the reference SNR is lower than 26 dB. This is because the direct channel estimation is more sensitive to the received SNR as the XL-array beamforming vector may not be well aligned with the channel path during signaling. Finally, there is a large performance gap between the conventional far-field beam training scheme and other schemes dedicated to near-field communications, which implies that far-field beam training is no longer effective for next generation wireless systems with more antennas.

Then, Fig. \ref{fig:Effective rate versus SNR} illustrates the effective rate $ R_{\mathrm{Eff}}$ versus the reference SNR $\gamma$. Interestingly, our proposed scheme outperforms other schemes in terms of the effective rate except for the perfect-CSI based beamforming. This is because our scheme achieves close or even superior achievable rates compared with other benchmarks (as have shown in Fig. \ref{fig:Achievable rate versus SNR}), but with far less training overhead. Moreover, the exhaustive search based near-field beam training scheme is not practically applicable due to small effective rate caused by large training overhead, as shown in Fig. \ref{fig:Effective rate versus SNR}.
\begin{figure}
	\centering
	\includegraphics[width=\columnwidth]{./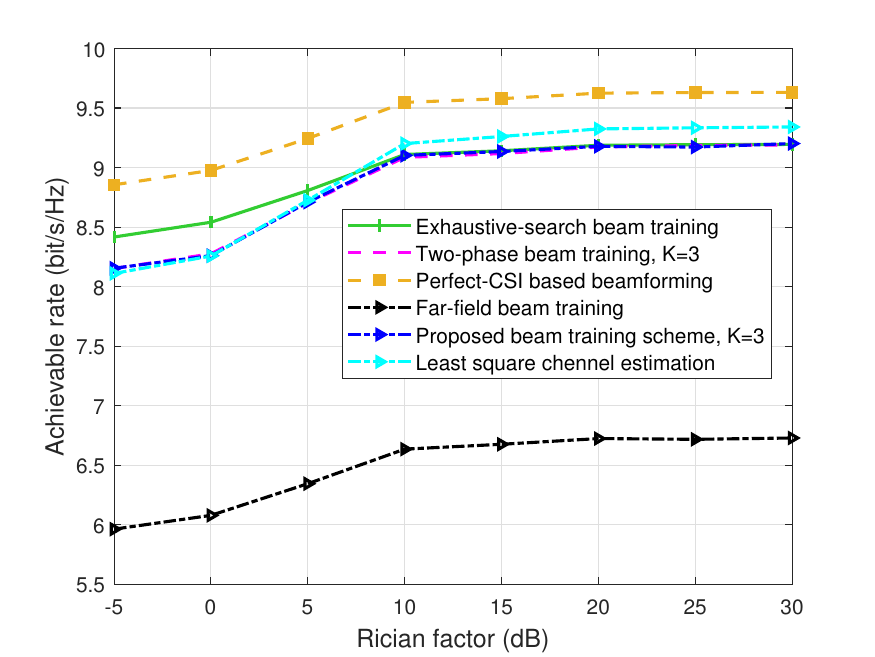}
	\caption{Achievable rate versus Rician factor.}
	\label{fig:Achievable rate versus Rician factor}
\end{figure}
\begin{figure}
	\centering
	\includegraphics[width=\columnwidth]{./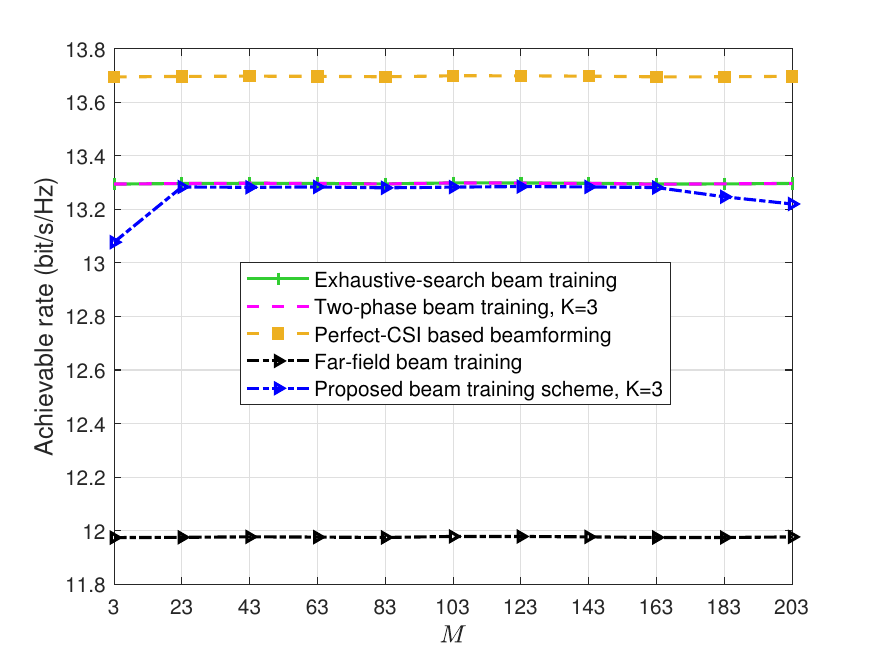}
	\caption{Achievable rate versus the number of antennas in the central subarray.}
	\label{fig:Achievable rate versus subarray}
\end{figure}

Fig. \ref{fig:Achievable rate versus user range} shows the effect of user range on the achievable rate. It can be observed that the proposed scheme with $K=3$ exhibits approximately the same performance as the exhaustive-search based and two-phase near-field beam training schemes for all user ranges. This is attributed to the smart design of the sparse DFT codebook. This method leads to a periodical energy-spread effect during beam training, and hence the former key observation in \cite{two_phase} can be leveraged for beam training. Second, the achievable rate performance of the proposed scheme largely outperforms the far-field beam training when the user range is less than 50 m and gradually converges to that of the far-field beam training. This verifies the universality of our proposed scheme for both near-field and far-field beam training cases. In addition, the relationship between effective rate and user range is depicted in Fig. \ref{fig:Effective rate versus user range}. Considering the overhead of beam training, the effective rate attained by our scheme is only slightly lower than (less than 1 bps/Hz) that of the perfect-CSI based schemes, which further verifies its effectiveness. 

In Fig. \ref{fig:Achievable rate versus Rician factor}, we evaluate the impact of Rician factor on the system achievable rate. It can be observed that the achievable rates of all schemes increase with the Rician factor
at first and gradually saturate when the Rician factor approximates 10 dB. Moreover, the two-phase beam training scheme slightly outperforms the proposed scheme when the Rician factor is less than 5 dB. This is because our scheme is more sensitive to noise due to operations such as received-beam-pattern shifting, since NLoS components can be treated as a form of environment noise.
\begin{figure}
	\centering
	\includegraphics[width=\columnwidth]{./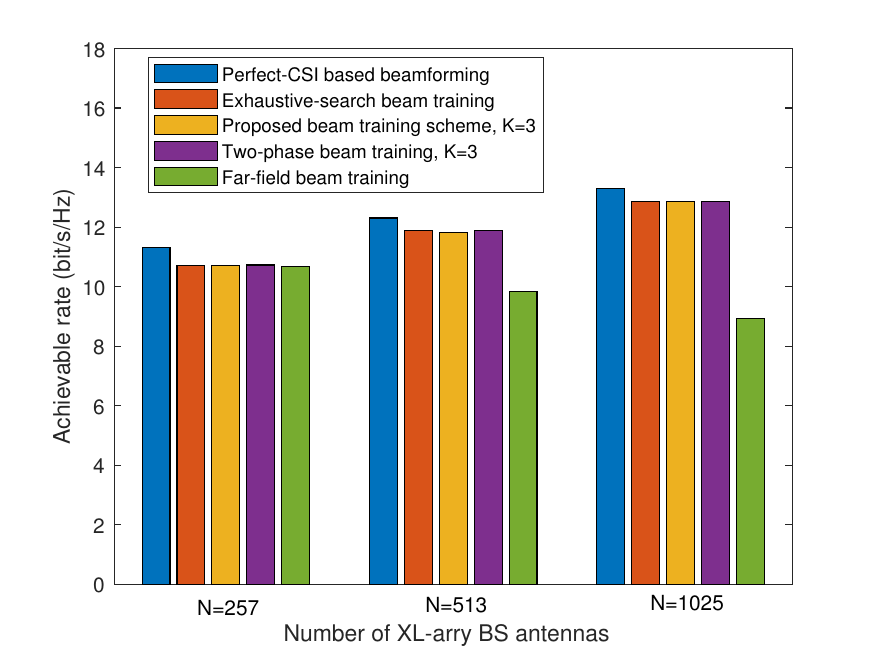}
	\caption{Achievable rate versus number of antennas.}
	\label{fig:Achievable rate versus atenna number}
\end{figure}

In Fig. \ref{fig:Achievable rate versus subarray}, we plot the achievable rate versus the number of antennas in the central subarray.
It can be observed that the achievable rate of the proposed scheme suffers from significant performance loss when the number of antennas in the central subarray is sufficiently small or large.
This can be explained by two facts: 1) When $ M $ is sufficiently large (Violation of Criteria 1 in Section \ref{sec:phase2}), the user are more likely to be located in the near-field region of the central subarray, for which the energy-spread effect is dominant and thus results in significant performance loss; 2) When $ M $ is sufficiently small (Violation of Criteria 2 in Section \ref{sec:phase2}), the beam width of the central subarray becomes too large, introducing extra ambiguity and thus failing to distinguish candidate user angles.

Finally, we plot the achievable rate versus the number of XL-array antennas in Fig. \ref{fig:Achievable rate versus atenna number}, where the user ranges are fixed at 80 m. It is observed that the achievable rates  increase as there are more antennas for all schemes except for the far-field beam training scheme. This is because as the number of antennas increases, the near-field effect is more prominent, and hence the far-field beam training is no longer effective even when the user is located at a relatively far range (i.e., 80 m) from the XL-array.
 
\section{Conclusion}
\label{Sec:Conclusion}
In this paper, we proposed a novel near-field beam training scheme enabled by the sparse DFT codebook (sparse far-field beamforming vectors) to construct periodic received beam pattern at the user.
To this end, we showed that the angular periodicity of the received beam pattern boosts reduction in sweeping space, thereby significantly decreasing beam training overhead.
Specifically, the middle of the angular support within a period contains the user angle information.
Then, an activated central subarray can resolve the angular ambiguity followed by polar-domain codebook sweeping in the best user angle.
Finally, numerical results were presented to show that the proposed beam training scheme can achieve nearly the same performance in the high-SNR regime with the exhaustive-search scheme, while significantly reducing the beam training overhead.

\section*{Appendix A}
\section*{Proof of Lemma \ref{lemma:LSAgeneral}}
From \eqref{eq:GeneralMixedField}, when $ \Delta \in  [-1/U,1/U)$, $ B_1 $ is not a constant at $ 2k\pi $ for arbitrary integer $ k $ as $ q $ changes.
Hence, we have 
\begin{equation}
	\begin{aligned}
		\hat{f} \left( r_{0}, \theta_{0}; \theta \right) = &\frac{1}{Q}\!\! \left| \sum_{q\in \mathcal{Q}}\!\exp{\l(\jmath\pi  (A_1 q+ A_2 q^2)\r)} \right|\\
		= &\frac{1}{Q}\!\! \left| \sum_{q\in \mathcal{Q}}\!\exp{\l(\jmath\pi  A_2(q+ \frac{A_1}{2A_2})^2\r)} \right| \label{eq:beam pattern summation}
	\end{aligned}
\end{equation}
where $ A_1 = U\Delta $ and $ A_2 = \frac{(Ud_0)^2}{\lambda}\frac{1-\theta_{0}^2}{r_{0}} $.

Then, the summation in \eqref{eq:beam pattern summation} can be approximated by an integral, which is given by
\begin{equation}
	\label{eq:Fresnel}
	\begin{aligned}
		&\hat{f} \left( r_{0}, \theta_{0}; \theta \right) \overset{{(b_1)}}\approx \frac{1}{Q}\!\! \left| \int_{-\frac{Q}{2}}^{\frac{Q}{2}} \!\exp{\l(\jmath\pi  (A_1 q+ A_2 q^2)\mathrm{d}q \r)} \right|\\
		&~~~~~~\overset{{(b_2)}}= \frac{1}{Q}\!\! \left| \frac{1}{\sqrt{2A_2}}\int_{\sqrt{2A_2}(-\frac{Q}{2}+\frac{A_1}{2A_2})}^{\sqrt{2A_2}(\frac{Q}{2}+\frac{A_1}{2A_2})}\!e^{( \frac{\jmath\pi t^2}{2})}\mathrm{d}t \right|\\
		&=\!\! \left|\!\!  \frac{\int_{0}^{\sqrt{2A_2}(\frac{Q}{2}+\frac{A_1}{2A_2})}\!\!e^{( \frac{\jmath\pi t^2}{2})}\!\mathrm{d}t\!\! -\!\! \int_{0}^{\sqrt{2A_2}(-\frac{Q}{2}+\frac{A_1}{2A_2})}\!\!e^{( \frac{\jmath\pi t^2}{2})}\mathrm{d}t}{\sqrt{2A_2}Q}\! \right|,
	\end{aligned}
\end{equation}
where $ (b_1) $ is due to the approximation from the summation to the integral and $ (b_2) $ is obtained by setting $ A_2(q+\frac{A_1}{2A_2})^2 = \frac{t^2}{2} $.
Let $ \beta_{1} = \frac{A_1}{2A_2} = \Delta\sqrt{\frac{r_{0}}{d_0(1-\theta_{0}^2)}}$ and $ \beta_{2} = \frac{\sqrt{A_2}Q}{2} = \frac{QU}{2}\sqrt{\frac{d_0(1-\theta_{0}^2)}{r_{0}}}$, \eqref{eq:Fresnel} can be simplified as
\begin{equation}
	\begin{aligned}
		\hat{f} \left( r_{0}, \theta_{0}; \theta \right) & = \left| \frac{\int_{0}^{\beta_1+\beta_2}e^{( \frac{\jmath\pi t^2}{2})}\mathrm{d}t - \int_{0}^{\beta_1-\beta_2}e^{( \frac{\jmath\pi t^2}{2})}\mathrm{d}t}{2\beta_2} \right|\\
		&=G(\beta_{1},  \beta_{2}),
	\end{aligned}
\end{equation}
where $G(\beta_{1},  \beta_{2}) \triangleq (\widehat{C}(\beta_{1},\beta_{2}) +  \jmath(\widehat{S}(\beta_{1},\beta_{2}))/(2\beta_2)$,
$ \widehat{C}(\beta_{1},\beta_{2}) \triangleq {C}(\beta_{1}+\beta_{2}) - C(\beta_{1}-\beta_{2})$ {\rm and} $ \widehat{S}(\beta_{1},\beta_{2}) \triangleq S(\beta_{1}+\beta_{2}) - S(\beta_{1}-\beta_{2}) $. Specifically, $ C(x) = \int_{0}^{x} \cos(\frac{\pi}{2}t^2 ){\rm d}t $ {\rm and} $ S(x) = \int_{0}^{x} \sin(\frac{\pi}{2}t^2 ){\rm d}t $ {\rm are the Fresnel integrals.
The proof of Lemma \ref{lemma:LSAgeneral} . 

\bibliographystyle{IEEEtran}
\bibliography{IEEEabrv}

\end{document}